\RequirePackage{etex}
\documentclass[a4paper,USenglish,cleveref,thm-restate]{lipics-v2021}

\usepackage{cite}
\usepackage{amsmath,amssymb,mathtools}
\usepackage{tikz}

\usetikzlibrary{arrows.meta,positioning}
\usepackage[linesnumbered,ruled,vlined]{algorithm2e}
\SetCommentSty{mycommfont}
\usepackage{nicefrac}   
\SetAlgoSkip{smallskip}
\SetInd{0.5em}{0.8em}
\mathtoolsset{showonlyrefs=true}
\newcommand{\alert}[1]{\textcolor{red}{#1}}
 
\usepackage{colortbl}
\newcolumntype{d}{>{\columncolor{gray!30}}c}
\usepackage{booktabs}
\usepackage{float}
\usepackage{multirow}

\newcommand{\bA}{\mathbf{A}}
\newcommand{\cA}{\mathcal{A}}
\newcommand{\bB}{\mathbf{B}}

\newcommand{\Umax}{\mathrm{Umax}}
\newcommand{\Emax}{\mathrm{Emax}}
\newcommand{\fEmax}{\overline{\mathrm{Emax}}}
\renewcommand{\mid}{:}
\DeclareMathOperator*{\argmax}{arg\,max}

\title{Simultaneously Efficient Allocation of Indivisible Items Across Multiple Dimensions}
\author{Yasushi Kawase}{Chuo University, Tokyo, Japan}{}{https://orcid.org/0000-0001-5626-779X}{JST ERATO Grant Number JPMJER2301, JSPS KAKENHI Grant Number JP25K00137, Value Exchange Engineering, a joint research project between Mercari R4D Lab and RIISE (Research Institute for an Inclusive Society through Engineering).}
\author{Bodhayan Roy}{Indian Institute of Technology Kharagpur, Kharagpur, India}{}{https://orcid.org/0009-0005-6476-3060}{ANRF MATRICS Grant Number MTR/2021/000474}
\author{Mohammad Azharuddin Sanpui}{Indian Institute of Technology Kharagpur, Kharagpur, India}{}{https://orcid.org/0000-0001-5030-9645}{JST Sakura Science Exchange Program, JST LOTUS Programme Grant Number JPMJ25152681}
\authorrunning{Y. Kawase, B. Roy, and M. A. Sanpui}
\Copyright{Yasushi Kawase, Bodhayan Roy, and Mohammad Azharuddin Sanpui}
\ccsdesc[100]{Theory of computation~Design and analysis of algorithms}
\keywords{Fair Division of Indivisible Items, Multidimensional Evaluations,
Utilitarian Social Welfare, Egalitarian Social Welfare, Pareto Optimality, Approximation Algorithms, NP-hardness}
\relatedversion{}
\EventEditors{John Q. Open and Joan R. Access}
\EventNoEds{2}
\EventLongTitle{42nd Conference on Very Important Topics (CVIT 2016)}
\EventShortTitle{CVIT 2016}
\EventAcronym{CVIT}
\EventYear{2016}
\EventDate{December 24--27, 2016}
\EventLocation{Little Whinging, United Kingdom}
\EventLogo{}
\SeriesVolume{42}
\ArticleNo{23}

\nolinenumbers

\begin{document}
\maketitle

\begin{abstract}
Many allocation problems are intrinsically multidimensional, since an item may contribute differently to several criteria, and optimizing a single aggregate objective can hide severe losses in other dimensions. We study how much efficiency can be guaranteed simultaneously when indivisible items have multiple attributes. To this end, we introduce the \emph{multidimensional efficient allocation} (MDEA) model, where each agent has an additive valuation in each dimension, and investigate simultaneous efficiency under utilitarian social welfare (USW) and egalitarian social welfare (ESW). Our results reveal a sharp worst-case frontier. For exact efficiency, maximizing the number of dimensions attaining the USW optimum admits a $c/\ell$-approximation for every fixed constant $c$, and this dependence on the number $\ell$ of dimensions is essentially unavoidable; for ESW, even deciding whether two dimensions can be optimized simultaneously is NP-hard with binary valuations. For approximate simultaneous efficiency in every dimension, we identify a tight threshold of order $1/\ell$, showing that such guarantees always exist for both USW and ESW, while any asymptotically better dependence on $\ell$ is impossible, even for binary valuations. Finally, we introduce three natural multidimensional Pareto notions and characterize both their relationships and their computational complexity.
\end{abstract}

\newpage

\section{Introduction}\label{sec:introduction}
Resource allocation is a fundamental problem in economics and computer science~\cite{Amanatidis2022FairDO,reiter1962allocating,maskin1987fair,huesch2012one}. Classical models of indivisible-item allocation typically optimize a single welfare objective, such as utilitarian or egalitarian social welfare~\cite{aziz2022algorithmic,lipton2004approximately}. In many realistic settings, however, the quality of an allocation cannot be captured by a single scalar objective, because items must be evaluated along multiple dimensions.

Such multi-criteria considerations arise in a wide range of applications. In cloud resource allocation, virtual machines or server slots may need to be assigned while accounting simultaneously for CPU capacity, memory, bandwidth, and energy efficiency. In personnel allocation, workers may be assigned to projects or service units, each of which needs a balanced mix of skills such as technical expertise, experience, communication ability, and domain knowledge. In such settings, optimizing one criterion in isolation can substantially degrade performance in others, while aggregating all criteria into a single score may obscure important trade-offs or stakeholder priorities. Since each dimension represents a distinct requirement, ignoring any one of them may lead to unacceptable outcomes. This inherent tension makes simultaneous efficiency fundamentally challenging.

These challenges motivate the need to understand allocation mechanisms that perform well across multiple dimensions simultaneously. Our goal is to determine \emph{to what extent efficiency can be achieved simultaneously across several dimensions}.

This perspective is also reflected in recent work on fair allocation with multiple objectives. One line augments fair allocation by allowing the allocator, in addition to agents, to have preferences over outcomes~\cite{Bu2024fair,Barman2025FairDW,flammini2025fair}; another studies fair allocation with multidimensional preferences~\cite{Kawase2025SimultaneouslyFA}. These works show the importance of going beyond a single objective, but they focus primarily on fairness and are largely limited to one or two dimensions. In contrast, we study simultaneous efficiency in a model with an arbitrary number of dimensions, with the goal of characterizing the resulting dimension-dependent trade-offs, approximation guarantees, and impossibility phenomena.

To capture this setting, we introduce the \emph{Multidimensional Efficient Allocation (MDEA)} model. In MDEA, each item is described by $\ell$ attributes (or dimensions), and each agent's valuations are additive within each dimension. By abstracting away from application-specific structure, MDEA provides a clean framework for studying the dimension-dependent trade-offs that arise when efficiency must be pursued simultaneously across many objectives. We study simultaneous efficiency under two fundamental welfare criteria: \emph{utilitarian social welfare (USW)}, the sum of agents' utilities, and \emph{egalitarian social welfare (ESW)}, the minimum utility among agents. Ideally, we would like an allocation that maximizes USW or ESW in every dimension at once; we call such allocations \emph{simultaneous Umax (sUmax)} and \emph{simultaneous Emax (sEmax)}, respectively. In one dimension, Umax is optimized greedily by assigning each item to an agent who values it most, whereas Emax is the classical max-min allocation problem and is already NP-hard for two agents with ternary valuations~\cite{fitzsimmons2024hardness}.

This exact notion is already too demanding even in tiny instances. The following two examples show this for the two welfare notions we study. The first shows that maximizing total welfare in every dimension at once can be impossible, while the second shows that the same difficulty persists when the objective is to maximize the minimum welfare across agents in every dimension.

\begin{example}[Impossibility for sUmax]\label{ex:umax1-imp}
Consider two agents, two dimensions, and $2c+1$ items. Each item gives value 1 to agent 1 in dimension 1 and value 1 to agent 2 in dimension 2, with value 0 otherwise. The optimal USW is $2c+1$ in each dimension, but any allocation assigns each item to a single agent, so at least one dimension gets value at most $c$. Hence, no sUmax allocation exists for any $c\ge 0$.
\end{example}

\begin{example}[Impossibility for sEmax]\label{ex:emax1-imp}
Consider two agents, two dimensions, and $4c+2$ items. $2c+1$ items favor agent 1 in dimension 1 and agent 2 in dimension 2, and the other $2c+1$ swap these roles, each with value 1 for the favored agent and 0 otherwise. The optimal ESW is $2c+1$ in each dimension, but any allocation leaves some agent with value at most $c$ in some dimension. Hence, no sEmax allocation exists for any $c \ge 0$.
\end{example}

Together, these examples show that the obstruction is not specific to one welfare criterion: exact simultaneous efficiency can fail for both USW and ESW, although the source of failure is interpreted differently in the two cases. Rather than indicating a weakness of the model, this points to a more basic phenomenon: simultaneous efficiency across many dimensions is intrinsically limited in a general multidimensional setting. This makes exact simultaneous efficiency too strong a target in general and motivates several natural relaxations.

One possibility is to maximize the number of dimensions in which the optimum is attained, asking how many objectives can still be satisfied exactly at the same time. A second possibility is to seek the best possible simultaneous guarantee in every dimension. Since a purely multiplicative relaxation is already ruled out by \Cref{ex:umax1-imp,ex:emax1-imp}, we instead study guarantees that combine multiplicative approximation with an additive loss caused by a bounded number of high-impact goods, namely \emph{$\alpha$-sUmax up to $c$ goods ($\alpha$-sUmax$\,c$)} and \emph{$\alpha$-sEmax up to $c$ goods ($\alpha$-sEmax$\,c$)}. This loss is charged only to goods whose contribution is unavailable to the relevant requirement, rather than to the optimum of a smaller instance. For Umax this means goods not assigned to a dimension-wise maximizing agent, and for Emax it means goods outside the relevant agent's bundle. These notions are related in spirit to $\alpha$-EF1 in fair division~\cite{amanatidis2023round,Barman2025FairDW}, but here the goal is to characterize the optimal worst-case dependence on the number of dimensions. A third possibility is to study multidimensional Pareto optimality, which provides weaker but always meaningful baseline notions of efficiency. In particular, we consider three ways of comparing allocations: by checking whether every agent-dimension pair weakly improves (\emph{PO-agent}), whether the vector of USW values across dimensions weakly improves (\emph{PO-USW}), or whether the vector of ESW values across dimensions weakly improves (\emph{PO-ESW}). Together, these perspectives let us quantify the frontier of what simultaneous efficiency can and cannot guarantee across multiple dimensions.

Our results reveal a fundamental limitation: achieving efficiency across multiple dimensions requires significant trade-offs. In particular, guarantees degrade with the number of dimensions, and this degradation is unavoidable, highlighting a qualitative gap from the single-dimensional setting.

\subsection{Our Results}\label{sec:our-results}

This paper presents the first systematic study of simultaneously efficient allocation when items are characterized by multiple attributes. We establish structural, algorithmic, and hardness results for exact, approximate, and Pareto-based notions of simultaneous efficiency across multiple dimensions. Throughout, we treat the set of dimensions as given exogenously; in practical applications, choosing or learning a useful low-dimensional representation is an important complementary modeling question, and our guarantees apply to the selected representation. Let $n$ be the number of agents, $m$ be the number of items, and $\ell$ be the number of dimensions.

For \emph{exact} simultaneous efficiency, we study how many dimensions can be optimized exactly under Umax and Emax. On the positive side, maximizing the number of Umax dimensions is polynomial-time solvable when either the number of items or the number of dimensions is constant~(\Cref{thm:Umax-constant}), and admits a $c/\ell$-approximation for every fixed constant $c$~(\Cref{thm:Umax-ell-approx}). On the negative side, even with two agents and binary valuations, the problem is NP-hard to approximate within a factor of $1/\ell^{1-\epsilon}$~(\Cref{thm:sUmax-Max-Intersect}), and maximizing the number of Emax dimensions is already NP-hard with binary valuations and two dimensions~(\Cref{thm:sEmax-hard}). This shows that the dependence on $\ell$ in the approximation guarantee is essentially unavoidable. We further show that, in the worst case, the fraction of dimensions in which Umax or Emax can be attained may be exponentially small in the numbers of agents and items~(\Cref{thm:Umax-fraction,thm:Emax-fraction}), establishing strong impossibility bounds.

For \emph{approximate} simultaneous efficiency across all dimensions, we identify a sharp threshold around $1/\ell$. An $\alpha$-sUmax$\,1$ allocation always exists for $\alpha=1/\ell$, but may fail to exist for any $\alpha>1/\ell$~(\Cref{thm:a-sUmax1-exist,thm:a-sUmax1-nonexist}). Likewise, an $\alpha$-sEmax$\,\ell$ allocation always exists for $\alpha=1/\ell$, but may fail for any $\alpha>1/\ell$~(\Cref{thm:a-sEmaxl-exist,thm:a-sEmax1-nonexist}). Thus, the order $1/\ell$ is not an artifact of our analysis, but the best possible worst-case guarantee for simultaneous efficiency in every dimension of the chosen representation. Consequently, reducing attention to a smaller set of decision-critical dimensions directly yields stronger guarantees with the corresponding smaller value of $\ell$. We also study related aggregate objectives: maximizing the sum of USWs over all dimensions is polynomial-time solvable~(\Cref{thm:total-USWs}), whereas maximizing the minimum of USWs or ESWs over all dimensions is NP-hard even to approximate~(\Cref{thm:sUmax-ge1,thm:sEmax-g1}).

For \emph{multidimensional Pareto optimality}, we define and compare three natural notions: PO-agent, PO-USW, and PO-ESW. We show that PO-USW implies PO-agent, while PO-USW and PO-ESW, as well as PO-agent and PO-ESW, are incomparable~(\Cref{prop:po-usw-esw,thm:USW-implication,thm:ESW-implication}). We also establish strong computational hardness: deciding whether a given allocation satisfies PO-agent is coNP-hard already in the single-dimensional case, and deciding PO-USW or PO-ESW is coNP-hard already in the two-dimensional case~(\Cref{thm:PO-coNP,thm:po-usw-coNPhard,thm:po-esw-coNPhard}). In addition, checking and computing PO-ESW remain hard even in the binary two-dimensional setting~(\Cref{thm:binary-po-esw-coNPhard}).

A summary of the main results, including tightness information, is deferred to \Cref{tab:results} in \Cref{app:results}.

\subsection{Related Work}\label{sec:related-work}
A line of work augments fair allocation by allowing the allocator, in addition to agents, to have preferences over outcomes~\cite{Bu2024fair,Barman2025FairDW,flammini2025fair}. These settings can be viewed as closely related two-dimensional cases of our model, with one dimension induced by the allocator (or market/social criterion) and the other by agents. Bu et al.~\cite{Bu2024fair} initiated fair allocation with allocator and agent preferences, Barman et al.~\cite{Barman2025FairDW} studied fair division with a common market valuation, and Flammini et al.~\cite{flammini2025fair} considered social-impact criteria. While these works emphasize fairness guarantees together with allocator objectives, we focus instead on simultaneous efficiency across multiple dimensions.

Another related line studies allocation among \emph{groups of agents}, primarily through fairness criteria~\cite{foley1966resource,manurangsi,SegalHalevi,Suksompong2017,Conitzer2019GroupFF,berliant1992fair,Aziz2019AlmostGE,segal2019,golz2025fair}. In these models, $\nu = \nu_1 + \cdots+ \nu_n$ agents are partitioned into $n~(\geq 2)$ groups, where group $i$ has $\nu_i~(\geq 1)$ members. When all groups have the same size, each group can be regarded as a single agent with one dimension for each member. A simple example is allocation to $n$ parent-child pairs: each pair can be viewed as one agent with two dimensions, one for the parent's valuation and one for the child's valuation, so our framework captures the problem of balancing efficiency for parents and for children across all pairs. More generally, fixed-group allocation fits naturally into our multidimensional framework, while heterogeneous group sizes can still be embedded by padding smaller groups with dummy zero-valued members. Kawase et al.~\cite{Kawase2025SimultaneouslyFA} studied fair allocation with multidimensional preferences, focusing on fairness guarantees such as EF1, PROP1, and maximin-share-type notions. In contrast, the present paper studies efficiency: simultaneous welfare maximization, tight dimension-dependent approximation thresholds, and multidimensional Pareto optimality. Hence, while the models are closely related, the objectives and the resulting algorithmic and hardness questions are largely orthogonal.

Our model is also related to multi-layered cake cutting~\cite{Hosseini2020FairDO,igarashi2021envy,Sanpui2026ProportionalAO,Cloutier2009TwoplayerEM,Nyman2017FairDW,Lebert2013EnvyfreeTM,kawase2025resource}. A key structural difference is that multi-layered cake cutting typically allocates different layers at the same position to different agents, whereas in our setting all dimensions of an item are assigned together. As a result, the underlying feasibility structure and the relevant efficiency trade-offs are different.

Overall, our framework complements multidimensional allocation models that emphasize fairness or allocator objectives by focusing directly on simultaneous efficiency.

\section{Preliminaries}\label{prilimi:model}

For a positive integer $n$, let $[n]=\{1,2,\dots,n\}$. 
Let $N=[n]$ denote the set of agents, $M=\{g_1,g_2,\dots,g_m\}$ the set of indivisible items, and $L=[\ell]$ the set of dimensions.  
Each agent $i \in N$ is equipped with a valuation function $v_i\colon M \to \mathbb{R}_+^\ell$, which assigns to every item an $\ell$-dimensional vector of nonnegative real numbers.  
We use $v_{ijk} = v_i(g_j)_k$ to denote the value that agent $i$ assigns to item $g_j \in M$ in dimension $k \in L$.
An instance of the MDEA problem is represented as $(N,M,L,(v_i)_{i\in N})$.

An allocation $\bA=(A_1,A_2,\dots,A_n)$ is a partition of the items $M$ (i.e., $\bigcup_{i\in N}A_i=M$ and $A_i\cap A_{i'}=\emptyset$ for any distinct agents $i,i'\in N$), where each subset $A_i\subseteq M$ is allocated to agent $i\in N$. The total valuation of agent $i$ for her allocated set $A_i$ is defined as 
$v_i(A_i)=\sum_{g_j\in A_i} v_i(g_j)\in\mathbb{R}_+^\ell$. Specifically, the $k$th dimensional value of agent $i$ for the allocated set $A_i$ is defined as $v_i(A_i)_k=\sum_{g_j\in A_i} v_{ijk}$.

We call an instance of the MDEA problem \emph{binary} if $v_{ijk}\in\{0,1\}$ for every agent $i\in N$, every item $g_j\in M$, and every dimension $k\in L$.
We call an instance \emph{identical} if every agent has the same valuation function, i.e., $v_1=v_2=\dots=v_n$.

For each dimension $k\in L$, define the maximum utilitarian social welfare as $\Umax_k\coloneqq\max_{\bA}\sum_{i\in N}v_i(A_i)_k$.
Similarly, define the maximum egalitarian social welfare as $\Emax_k\coloneqq\max_{\bA}\min_{i\in N}v_i(A_i)_k$.
An allocation $\bA$ is \emph{sUmax} if its USW attains $\Umax_k$ for all $k\in L$, and \emph{sEmax} if its ESW attains $\Emax_k$ for all $k\in L$.
For an allocation $\bA$ and dimension $k$, let
\[
\textstyle D_k(\bA)=\{g_j\in M\mid g_j\in A_i\text{ for some }i\in N\text{ with }v_{ijk}<\max_{i'\in N}v_{i'jk}\}.
\]
Thus, $D_k(\bA)$ is the set of items whose dimension-$k$ contribution to Umax is not fully realized by $\bA$.
Equivalently, items outside $D_k(\bA)$ already contribute their full possible value to dimension-$k$ USW, so the shortfall from $\Umax_k$ is supported only on items in $D_k(\bA)$.
For a nonnegative integer $c$, we further say that $\bA$ is \emph{$\alpha$-sUmax$\,c$} if 
\begin{align}
\textstyle \sum_{i\in N}v_i(A_i)_k\ge \alpha\cdot \Umax_k - \max_{B\subseteq D_k(\bA):\,|B|\le c}\sum_{g_j\in B}\max_{i\in N}v_{ijk}
\end{align}
for all $k\in L$, and \emph{$\alpha$-sEmax$\,c$} if 
\begin{align}
\textstyle v_i(A_i)_k\ge \alpha\cdot \Emax_k - \max_{B\subseteq M\setminus A_i:\,|B|\le c}\sum_{g_j\in B}\max_{i'\in N}v_{i'jk}
\end{align}
for all $i\in N$ and $k\in L$.
These definitions do not compare the achieved welfare with the optimum of a smaller instance.
Instead, they allow an additive loss charged to at most $c$ high-impact goods that the allocation does not use for the relevant requirement.
For Umax, these are goods in $D_k(\bA)$; for Emax, the loss for agent $i$ is charged only to goods outside $A_i$.
These definitions are evaluated \emph{dimension-wise}. For a fixed dimension $k\in L$, the additive term
$\max_{B\subseteq X:\,|B|\le c}\sum_{g_j\in B}\max_{i\in N} v_{ijk}$, with $X=D_k(\bA)$ for Umax and $X=M\setminus A_i$ for Emax, selects up to $c$ items with the largest contribution in dimension~$k$, where each item is valued by the agent who values it most in that dimension. 
Thus, these notions capture robustness against a bounded number of influential indivisible goods that remain unavailable to the corresponding objective in each dimension.
Moreover, we introduce related optimization problems:
\begin{itemize}
\item Maximum Umax dimensions: $\max_{\bA}|\{k\in L\mid \sum_i v_i(A_i)_k=\Umax_k\}|$.
\item Maximum Emax dimensions: $\max_{\bA}|\{k\in L\mid \min_i v_i(A_i)_k=\Emax_k\}|$.
\end{itemize}

An allocation $\bA$ is said to be PO-agent if there is no allocation $\bB$ such that 
(i) $v_i(B_i)_k\geq v_i(A_i)_k$ for all $i\in N$ and for all $k\in L$ and 
(ii) $v_i(B_i)_k> v_i(A_i)_k$ for some $i\in N$ and for some $k\in L$.  
An allocation $\bA$ is said to be PO-USW if there is no allocation $\bB$ such that 
(i) $\sum_{i=1}^n v_i(B_i)_k\geq \sum_{i=1}^nv_i(A_i)_k$ for all $k\in L$ and 
(ii) $\sum_{i=1}^n v_i(B_i)_k> \sum_{i=1}^n v_i(A_i)_k$ for some $k\in L$.   
An allocation $\bA$ is said to be PO-ESW if there is no such allocation $\bB$ such that 
(i) $\min_{i\in N} v_i(B_i)_k\geq \min_{i\in N} v_i(A_i)_k$ for all $k\in L$ and 
(ii) $\min_{i\in N}v_i(B_i)_k> \min_{i\in N} v_i(A_i)_k$ for some $k\in L$.   
\section{Maximizing Efficient Dimensions}
In this section, we focus on maximizing the number of dimensions in which Umax or Emax is achieved.
\subsection{Maximizing Umax Dimensions}
We analyze the problem of maximizing the number of Umax dimensions.
We begin with some basic observations.
To achieve Umax value for a specific dimension $k\in L$, each item $g_j\in M$ must be allocated to an agent $i\in N$ who values it most, that is, $v_{ijk}=\max_{i'\in N}v_{i'jk}$.
Thus, it suffices to consider the corresponding binary case $(N,M,L,(\hat{v}_i)_{i\in N})$, where $\hat{v}_{ijk}=1$ if $v_{ijk}=\max_{i'}v_{i'jk}$, and $\hat{v}_{ijk}=0$ otherwise.

Then, an allocation attains Umax for dimension $k\in L$ in the original instance if and only if it achieves Umax for $k$ in the binary instance, that is, if $\sum_{i\in N}\hat{v}_i(A_i)_k=m$.

Thus, for any subset of dimensions $L'\subseteq L$, there exists an allocation that simultaneously maximizes USW for all dimensions in $L'$ if and only if, for every $g_j\in M$, there exists an agent $i$ such that $\hat{v}_{ijk}=1$ for every $k\in L'$. Hence, it is easy to check the existence of such an allocation.
Specifically, by taking $L'=L$, we can check existence of sUmax.
\begin{theorem}\label{thm:sUmax}
The existence of an sUmax allocation can be checked in polynomial time.
Moreover, if such an allocation exists, it can be found in polynomial time.
\end{theorem}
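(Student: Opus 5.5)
The plan is to make the observation preceding the theorem rigorous. An allocation attains $\Umax_k$ exactly when every item is assigned to an agent whose value for it in dimension $k$ is maximal. The argument proceeds in three steps.

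\emph{Step 1: per-dimension characterization.} Fix a dimension $k$. For any allocation $\bA$,
\[
\textstyle\sum_{i}v_i(A_i)_k=\sum_{g_j\in M} v_{\sigma(j)jk}\le \sum_{g_j\in M}\max_{i'}v_{i'jk},
\]
where $\sigma(j)$ denotes the agent receiving $g_j$. Assigning each item to a maximizer attains the upper bound, so $\Umax_k=\sum_j\max_{i'}v_{i'jk}$. Since each summand satisfies $v_{\sigma(j)jk}\le\max_{i'}v_{i'jk}$, equality holds if and only if $\hat v_{\sigma(j)jk}=1$ for every $j$. Here $\hat v$ is the binary indicator defined just before the theorem.

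\emph{Step 2: simultaneous characterization.} By Step 1, $\bA$ is sUmax if and only if, for every item $g_j$, the receiving agent $\sigma(j)$ satisfies $\hat v_{\sigma(j)jk}=1$ for all $k\in L$. Assignments of different items are independent, because an allocation is just a function $M\to N$ with no capacity constraints. Hence an sUmax allocation exists if and only if each item $g_j$ has a nonempty set
\[
S_j=\{i\in N : v_{ijk}=\max_{i'}v_{i'jk}\ \text{for all } k\in L\}.
\]

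\emph{Step 3: algorithm.} For each item $j$ and dimension $k$, compute $\max_{i'}v_{i'jk}$ in $O(n)$ time. Then build $S_j$ by checking each agent against all dimensions, which takes $O(nm\ell)$ time in total. If some $S_j$ is empty, report that no sUmax allocation exists. Otherwise, give each $g_j$ to an arbitrary agent in $S_j$; by Step 2 the resulting partition is sUmax.

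The main obstacle is minor. It is to justify the equality case in Step 1 carefully: equality of the sums forces termwise equality because every term is bounded by its maximum. It is also worth remarking that comparing real numbers for equality is polynomial in the input encoding.
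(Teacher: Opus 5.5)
Your proposal is correct and follows essentially the same route as the paper. The paper also passes to the binary maximizer-indicator instance $\hat v$ and observes that a set of dimensions is simultaneously Umax-optimal iff every item has an agent that is a maximizer in all those dimensions; this condition is checked and realized item by item, and your termwise-equality argument and explicit $O(nm\ell)$ procedure just spell out details the paper leaves implicit.
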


Moreover, if the number of dimensions $\ell$ is a constant, the number of subsets $L'\subseteq L$ is $2^\ell$, which is a constant number. Consequently, we can maximize the number of Umax dimensions in polynomial time by enumerating all subsets $L'\subseteq L$ and checking whether the dimensions in $L'$ can be simultaneously maximized with respect to USW.

Separately, if the number of items $m$ is constant, then we can enumerate all the possible allocations, which is at most $O(n^m)$, a polynomial. Hence, in this case, we can maximize the number of Umax dimensions by simply enumerating all possible allocations.

Thus, we obtain the following.
\begin{observation}\label{thm:Umax-constant}
If either the number of items $m$ or the number of dimensions $\ell$ is bounded by a constant, then the maximum Umax dimensions problem can be solved in polynomial time.
Moreover, there is an FPT algorithm with respect to $\ell$ for the maximum Umax dimensions problem.
\end{observation}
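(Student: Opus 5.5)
The plan is to reduce everything to the binary characterization stated just before the observation. First I will build the binary instance $\hat v$ in $O(nm\ell)$ time by computing, for each item $g_j$ and dimension $k$, the value $\max_{i'}v_{i'jk}$ and marking the agents that attain it. For each item $g_j$ and agent $i$, define the set $S_{ij}=\{k\in L \mid \hat v_{ijk}=1\}$. A subset $L'\subseteq L$ can be simultaneously Umax exactly when, for every item $g_j$, some agent $i$ satisfies $L'\subseteq S_{ij}$. Given such an $L'$, the allocation is obtained by assigning $g_j$ to any such agent. This gives a polynomial-time feasibility test: $O(nm\ell)$ per subset.

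For bounded $\ell$, I enumerate all $2^\ell$ subsets $L'$. For each, I run the feasibility test, and I return the largest feasible $L'$ together with its witnessing allocation. The total time is $O(2^\ell nm\ell)$. This is polynomial when $\ell$ is constant and is of the form $f(\ell)\cdot\mathrm{poly}(n,m)$ in general, which gives the FPT claim. Correctness rests on one point: the optimal allocation's set of Umax dimensions is itself some feasible $L'$, and conversely a feasible $L'$ yields an allocation attaining Umax on at least all of $L'$. So the maximum over feasible $L'$ equals the optimum.

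For bounded $m$, I enumerate all $n^m$ allocations. For each, I compute the USW in every dimension and compare it against $\Umax_k=\sum_j\max_i v_{ijk}$, which takes $O(nm\ell)$ time. I keep the allocation with the most matching dimensions. The total $O(n^{m}\cdot nm\ell)$ is polynomial for constant $m$.

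There is no real obstacle here. The only points needing care are, first, the equivalence between Umax in the original and binary instances, which holds because USW in dimension $k$ equals $\Umax_k$ iff every item goes to a maximizer in $k$. Second, the exact comparison of sums, which is fine under standard rational input.
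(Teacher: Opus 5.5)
Your proposal is correct and takes essentially the same approach as the paper. It enumerates all $2^\ell$ subsets $L'$ and checks simultaneous Umax feasibility via the binary characterization, which also gives the $f(\ell)\cdot\mathrm{poly}(n,m)$ FPT bound, and it enumerates all $n^m$ allocations when $m$ is constant; your explicit running times and your observation that the optimal allocation's set of Umax dimensions is itself a feasible $L'$ just make the paper's brief argument precise.
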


Further, by enumerating all subsets $L'\subseteq L$ of size at most a constant $c$, we can find a solution for the maximum Umax dimensions problem whose objective value is the minimum of $c$ and the optimum value.
Since the optimum value for the maximum Umax dimensions problem is at most the number of dimensions $\ell$, this implies a $c/\ell$-approximation algorithm.
\begin{observation}\label{thm:Umax-ell-approx}
For any fixed constant $c$, there is a $c/\ell$-approximation algorithm for the maximum Umax dimensions problem, where $\ell$ is the number of dimensions.
\end{observation}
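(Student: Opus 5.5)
The plan is to turn the enumeration argument sketched just before the statement into an explicit algorithm and check both its running time and its ratio. I would first reduce to the binary instance $(\hat v_i)_{i\in N}$. An allocation attains $\Umax_k$ in the original instance exactly when every item goes to an agent $i$ with $\hat v_{ijk}=1$. It follows that a set $L'\subseteq L$ is \emph{simultaneously achievable} iff every item $g_j$ has some agent $i$ with $\hat v_{ijk}=1$ for all $k\in L'$. This can be tested in $O(nm|L'|)$ time, and when the test succeeds a witness allocation is obtained by giving each item to such an agent, as in \Cref{thm:sUmax}. I would also note that achievability is monotone: if $L'$ is achievable, so is every subset of $L'$, because the same witness allocation works.

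The algorithm enumerates all $L'\subseteq L$ with $|L'|\le c$. There are $\sum_{s\le c}\binom{\ell}{s}=O(\ell^c)$ of them, which is polynomial since $c$ is a fixed constant. Among the achievable ones it keeps a largest set $L^*$ and outputs its witness allocation. Some set is always achievable, namely $\emptyset$, so the output is well defined. The output allocation attains Umax in at least $|L^*|$ dimensions, and possibly in more.

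For the ratio, let $\mathrm{OPT}$ be the optimum and $L^{\circ}$ the set of Umax dimensions of an optimal allocation, so $|L^{\circ}|=\mathrm{OPT}$ and $L^{\circ}$ is achievable. By monotonicity, any subset of $L^{\circ}$ of size $\min(c,\mathrm{OPT})$ is achievable and is among the enumerated sets, so $|L^*|\ge\min(c,\mathrm{OPT})$. If $\mathrm{OPT}\le c$, the algorithm is exact. Otherwise the value is at least $c$, and since $\mathrm{OPT}\le\ell$ we get $c\ge (c/\ell)\,\mathrm{OPT}$. In both cases the ratio is at least $\min(1,c/\ell)$, which gives the claim.

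None of these steps is difficult. The only point needing care is the monotonicity and witness argument: the enumerated small subset must correspond to an allocation realizing at least that many Umax dimensions. That holds here because achievability of a set of dimensions is a per-item condition, and any subset of an achievable set inherits the same witness allocation.
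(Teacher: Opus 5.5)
Your proposal is correct and follows the paper's own argument: enumerate all subsets of at most $c$ dimensions, test each for simultaneous achievability using the per-item binary criterion, and conclude via $\min(c,\mathrm{OPT})\ge (c/\ell)\,\mathrm{OPT}$ since $\mathrm{OPT}\le\ell$. Your explicit monotonicity and witness argument spells out a step the paper leaves implicit.
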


On the other hand, the maximum Umax dimensions problem remains computationally hard even when the number of agents $n$ is a constant. Specifically, we show a $1/\ell^{1-\epsilon}$-factor inapproximability even for two agents with binary valuations. Our reduction is from \emph{MAX-Intersect}, which is known to be as hard to approximate as Max-Clique~\cite{clifford2011maximum}.

\begin{restatable}{theorem}{THMSUMAXIN}\label{thm:sUmax-Max-Intersect}
For any $\epsilon>0$, it is NP-hard to approximate the maximum Umax dimensions problem with $\ell$ dimensions within a factor $1/\ell^{1-\epsilon}$, even when there are two agents with binary valuations.
\end{restatable}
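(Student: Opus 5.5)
The plan is a gap-preserving reduction from Maximum Clique, which is NP-hard to approximate within $|V|^{1-\epsilon}$ for every $\epsilon>0$ (Zuckerman). The paper cites MAX-Intersect~\cite{clifford2011maximum}, which is as hard as Max-Clique; reducing from Max-Clique directly seems cleanest, and the gadget below is essentially the MAX-Intersect structure. The dimensions will play the role of vertices, so $\ell=|V|$, and the optimum of the maximum Umax dimensions problem will equal the clique number $\omega(G)$.

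First I reuse the observation preceding \Cref{thm:sUmax}. A set $L'$ of dimensions can be simultaneously Umax-optimized if and only if every item $g_j$ has an agent $i$ with $\hat v_{ijk}=1$ for all $k\in L'$. With two agents, write $S^1_j$ and $S^2_j$ for the sets of dimensions in which agent $1$, respectively agent $2$, is a maximizer for $g_j$; these satisfy $S^1_j\cup S^2_j=L$. Then $L'$ is feasible if and only if, for every $j$, either $L'\subseteq S^1_j$ or $L'\subseteq S^2_j$.

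Construction: given a graph $G=(V,E)$, let $L=V$ and use two agents. For each non-edge $\{u,w\}\notin E$ with $u\ne w$, create an item $g_{uw}$. Agent $1$ values $g_{uw}$ at $1$ in every dimension except $u$, where the value is $0$. Agent $2$ values it at $1$ in every dimension except $w$, where the value is $0$. In dimension $u$ only agent $2$ is a maximizer, in dimension $w$ only agent $1$, and in every other dimension both are. Hence $S^1=L\setminus\{u\}$ and $S^2=L\setminus\{w\}$ for this item. If $G$ is complete, add one dummy item valued $1$ by both agents everywhere, so that $M\neq\emptyset$. All valuations are binary, and the instance has polynomial size.

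Correctness: $L'$ is contained in $L\setminus\{u\}$ or in $L\setminus\{w\}$ exactly when $L'$ does not contain both $u$ and $w$. So $L'$ is feasible for every item if and only if $L'$ contains no non-edge, i.e., $L'$ is a clique. An allocation achieving Umax on a clique $L'$ is obtained by giving each $g_{uw}$ to agent $1$ if $u\notin L'$ and to agent $2$ otherwise. Conversely, the set of Umax dimensions of any allocation is feasible, hence a clique. Thus the optimum value is $\omega(G)$, and any solution converts back to a clique of the same size in polynomial time. A $1/\ell^{1-\epsilon}$-approximation would therefore approximate Max-Clique within $|V|^{1-\epsilon}$, which is NP-hard.

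The only delicate point is getting the gadget exact, so that each item forbids precisely one pair of dimensions and nothing else. The maximizer-pattern characterization above handles this, and nothing else in the argument is expected to cause difficulty.
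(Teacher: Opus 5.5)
Your proof is correct and is essentially the paper's reduction. The paper maps a MAX-Intersect instance to two agents with binary values $v_{ijk}=1$ iff $e_k\in S_{j,i}$. Your gadget is exactly this map applied to the special MAX-Intersect instances with one pair $\{L\setminus\{u\},L\setminus\{w\}\}$ per non-edge of $G$, so you have in effect inlined the Max-Clique-to-MAX-Intersect step that the paper takes from the citation.

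Inlining it has two small benefits. It makes $\ell=|V|$ explicit, so you rely only on the Max-Clique hardness. And since every item has maximum value $1$ in every dimension, $\Umax_k=|M|$ holds automatically, so you do not need the paper's preprocessing that deletes elements missing from some $S_{j,1}\cup S_{j,2}$.
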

The reduction maps each pair of sets in a MAX-Intersect instance to an item and each universe element to a dimension.
Assigning an item selects one of the two sets, and a dimension attains Umax exactly when the corresponding element belongs to all selected sets.
Thus, the number of Umax dimensions equals the intersection size, giving a gap-preserving reduction.

Together, these results (\Cref{thm:Umax-ell-approx,thm:sUmax-Max-Intersect}) show that while a simple $c/\ell$-approximation is achievable, improving the $1/\ell$ dependence is unlikely, highlighting a fundamental limitation imposed by the number of dimensions.

Finally, we analyze the worst-case fraction of dimensions that can achieve Umax.
We show that the fraction is exponentially small with respect to $n$ and $m$.
\begin{restatable}{theorem}{THMUMAXFRAC}\label{thm:Umax-fraction}
Fix the number of agents $n$, the number of items $m$, and the number of dimensions $\ell$.
For any MDEA instance, the maximum number of Umax dimensions $\max_{\bA}|\{k\in L\mid \sum_{i\in N}v_i(A_i)_k=\Umax_k\}|$ is at least $\lceil \ell/n^m\rceil$.
Moreover, there exists a binary MDEA instance for which $\max_{\bA}|\{k\in L\mid \sum_{i\in N}v_i(A_i)_k=\Umax_k\}|\le\lceil \ell/n^m\rceil$. 
\end{restatable}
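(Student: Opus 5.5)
The lower bound will come from a pigeonhole argument, and the upper bound from an instance in which every allocation gets its own dimensions.

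\emph{Lower bound.} First pass to the binary instance $\hat v$ defined above: an allocation attains $\Umax_k$ exactly when every item $g_j$ goes to an agent with $\hat v_{ijk}=1$. For each dimension $k$, build an allocation $\bA^{(k)}$ by giving each item $g_j$ to the smallest-indexed agent $i$ with $\hat v_{ijk}=1$. Such an agent exists because the maximum over agents is attained. By construction $\bA^{(k)}$ attains $\Umax_k$. This gives a map $k\mapsto \bA^{(k)}$ from the $\ell$ dimensions into the set of all allocations, and there are at most $n^m$ allocations. By pigeonhole, some allocation $\bA$ equals $\bA^{(k)}$ for at least $\lceil \ell/n^m\rceil$ dimensions $k$. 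That $\bA$ attains Umax in each of those dimensions, which proves the first claim.

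\emph{Upper bound.} Index the $n^m$ allocations as $\bA^{1},\dots,\bA^{n^m}$. Distribute the $\ell$ dimensions among them as evenly as possible, so that each allocation $\bA^t$ receives a set $L_t$ of at most $\lceil \ell/n^m\rceil$ dimensions. For each dimension $k\in L_t$, set $v_{ijk}=1$ if $g_j\in A^t_i$ and $v_{ijk}=0$ otherwise. Every item then has exactly one agent valuing it at 1 in dimension $k$, so $\Umax_k=m$. Moreover, an allocation $\bB$ attains $\Umax_k$ if and only if every item goes to that unique agent, that is, if and only if $\bB=\bA^t$. Hence an arbitrary allocation $\bB=\bA^s$ attains Umax exactly in the dimensions of $L_s$, and $|L_s|\le\lceil \ell/n^m\rceil$. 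The instance is binary, as the statement requires.

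The only point needing care is the case $\ell<n^m$. There some allocations receive no dimensions, which is harmless, and the bound $\lceil \ell/n^m\rceil=1$ still holds. I also have to make sure the counting uses ordered partitions of labelled agents, so that the number of allocations is exactly $n^m$; the lower bound's pigeonhole count holds regardless.

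Neither direction presents a real obstacle. The main step is recognizing that, in the binary instance, a single canonical allocation witnesses each dimension, and that the extremal construction realizes a bijection between dimensions and allocations.
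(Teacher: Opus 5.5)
Your proposal is correct and takes essentially the paper's approach. For the lower bound, you pigeonhole one witness allocation per dimension into the $n^m$ allocations, whereas the paper double-counts $\sum_{\bA} f(\bA)=\sum_k|\cA_k|\ge\ell$; the upper bound is exactly the paper's construction, in which dimensions are spread evenly over allocations and dimension $k$ values each item only for its recipient in the assigned allocation. The detour through $\hat v$ is harmless but unnecessary, since any Umax-attaining allocation can serve as the witness.
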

The lower bound follows by averaging over the $n^m$ allocations. Since each dimension has at least one Umax-achieving allocation, some allocation achieves Umax in at least $\lceil \ell/n^m \rceil$ dimensions. The matching upper bound is obtained by constructing an instance where each dimension admits a unique Umax allocation and distributing the dimensions evenly across allocations.

The $\lceil\ell/n^m\rceil$ bound is tight, showing that only a very small fraction of dimensions can be optimized simultaneously, an inherent limitation of the multidimensional setting.

\subsection{Maximizing Emax Dimensions}
We now analyze the problem of maximizing the number of Emax dimensions.
Without loss of generality, we may assume in this subsection that the number of items is at least the number of agents. Otherwise, the Emax value is zero for every dimension, and any allocation achieves sEmax.

Similar to the Umax case, when the number of items $m$ is constant, we can maximize the number of Emax dimensions by enumerating all possible allocations.
Thus, we obtain the following result.
\begin{observation}\label{thm:Emax-constant}
If the number of items $m$ is bounded by a constant, then the maximum Emax dimensions problem can be solved in polynomial time.
\end{observation}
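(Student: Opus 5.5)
The plan is to enumerate all allocations directly. When $m$ is a constant, there are exactly $n^m$ allocations: each item goes to one of the $n$ agents. This count is polynomial in $n$, since the exponent $m$ is fixed. We then need to evaluate the objective $|\{k\in L\mid \min_i v_i(A_i)_k=\Emax_k\}|$ for each candidate allocation $\bA$, which requires knowing $\Emax_k$ for every dimension $k\in L$.

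\textbf{Step 1: compute the targets.} For each dimension $k$, we compute $\Emax_k$ by brute force over the same family of allocations. For each of the $n^m$ allocations, we compute $\min_{i\in N}v_i(A_i)_k$ in $O(nm)$ time and keep the maximum. Doing this for all $\ell$ dimensions costs $O(\ell\, n^{m}\, nm)$ time, which is polynomial in the input size. This step is what separates the Emax case from the Umax case, where optimality per dimension was characterized greedily: although max-min allocation is NP-hard in general, here the bounded number of items makes exhaustive search over allocations cheap.

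\textbf{Step 2: maximize the count.} For each allocation $\bA$, we compute the vector $(\min_i v_i(A_i)_k)_{k\in L}$ and count the dimensions $k$ where it equals the precomputed $\Emax_k$. We then output an allocation with the largest count. Correctness follows because the maximization ranges over every allocation, and by Step 1 the targets $\Emax_k$ are exact. The total running time is again $O(\ell\, n^{m+1} m)$, which is polynomial.

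The only point needing care is that polynomiality is with respect to $n$ and $\ell$ (and the encoding of the values) for fixed $m$. Beyond that, there is no real obstacle: exact arithmetic on the given numbers suffices, since we only form sums of at most $m$ values and compare them.
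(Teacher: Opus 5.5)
Your proposal is correct and is essentially the paper's argument: for constant $m$, enumerate all $n^m$ allocations (polynomially many) and output one that attains Emax in the most dimensions. You also spell out that each $\Emax_k$ can be computed by the same brute-force enumeration, a step the paper leaves implicit.
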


However, unlike Umax, computing an Emax allocation is already difficult in the one-dimensional setting: with two agents, the problem is polynomial-time solvable for binary valuations, but becomes NP-hard for ternary valuations~\cite{fitzsimmons2024hardness}. For the exact simultaneous problem, the one-dimensional case is trivial, since an sEmax allocation always exists and the maximum number of Emax dimensions is one. In contrast, in two dimensions both of these decision problems become NP-hard even with binary valuations. The proof proceeds by a reduction from 3-dimensional matching (3DM).
\begin{restatable}{theorem}{THMSEAMHARD}\label{thm:sEmax-hard}
The problems of checking the existence of an sEmax allocation and of determining the maximum number of Emax dimensions are NP-hard, even when there are two dimensions and the valuations are binary.
\end{restatable}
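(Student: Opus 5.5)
We reduce from 3-dimensional matching (3DM), as the statement indicates. Let the instance have disjoint sets $X,Y,Z$ with $|X|=|Y|=|Z|=q$ and a family $T$ of $t\ge q$ triples; the question is whether some $q$ disjoint triples cover $X\cup Y\cup Z$. We use two dimensions, binary valuations, and one agent per triple. The aim is that $\Emax_1$ and $\Emax_2$ are always easy to attain separately, but attaining both at once forces a perfect matching.

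The proof would proceed in four steps.
\begin{enumerate}
\item \emph{Items.} Create one element item for each element of $X\cup Y\cup Z$. Add padding ``dummy'' items, which every agent values (in the appropriate dimension). There are exactly enough dummies that $t-q$ agents can be satisfied by dummies alone.
\item \emph{Valuations.} The agent of triple $\tau=(x,y,z)$ values, in each dimension, only the elements of $\tau$ relevant to that dimension, together with the dummies. One tentative choice is that dimension~1 sees $x,y$ and dimension~2 sees $y,z$. The thresholds are chosen so that the total available value in each dimension equals exactly $n\cdot\Emax_k$.
\item \emph{Values of $\Emax_k$.} Compute $\Emax_1$ and $\Emax_2$ in closed form, assuming without loss of generality that every element lies in some triple. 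Each single dimension only asks for a system of distinct representatives that uses the padding. Such a system always exists, by Hall's condition, which is immediate here because the dummies are universal.
\item \emph{Tightness.} Because the supply is tight, in an allocation attaining both optima every item must go to an agent that values it in every dimension where it carries value. Moreover, each agent must receive exactly its threshold in each dimension.
\end{enumerate}

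The equivalence then runs in two directions. A perfect matching yields an sEmax allocation: give each matched agent its three elements, and split the dummies among the unmatched agents. Conversely, take an sEmax allocation. By tightness, every agent is either ``element-type'' or ``dummy-type''. An element-type agent holds all three elements of its triple, so the element-type agents form a perfect matching.

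Since $\ell=2$, deciding existence of sEmax is the same as deciding whether the maximum number of Emax dimensions equals $2$. Both hardness claims therefore follow from the same reduction.

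The main obstacle is step 4. With naive padding, tight counting only forces each agent to hit its threshold in each dimension separately. An agent can then ``mix'', for instance taking its $x$ element plus a dummy in dimension~1 while another agent takes $y$ plus a dummy. This yields sEmax without a matching.

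To rule this out, I would first make the two dimensions interlock so that any partial use of a triple forces a deficit elsewhere. One option is to let both dimensions value the $y$ element and to use two disjoint dummy families: $D$, valued only in dimension~1, and $F$, valued only in dimension~2. The thresholds are then set so that the totals $\sum_k$ of holdings force $a_x=a_y=a_z$ or $a_x=a_y=a_z=0$ for each agent.

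If a direct binary gadget cannot force this, the fallback is to reduce from a restricted 3DM variant in which each element occurs in at most three triples, and to add one private ``anchor'' item per agent. The anchor is valued only in a dimension that makes mixing infeasible by a counting argument on the total number of dummy items. I expect the bulk of the work to be in this exclusion of mixed bundles.
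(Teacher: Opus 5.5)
Your proposal has a genuine gap, and it is the one you flag yourself: step~4 is never carried out. Neither the tentative construction nor the suggested repair rules out mixed bundles.

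The repair uses a dimension-1 dummy family $D$, a dimension-2 family $F$, and $y$ valued in both dimensions. It fails because universal dummies carry no information about which agent holds them. Per agent the constraints are $a_x+a_y+d=2$ and $a_y+a_z+f=2$. Split bundles such as $\{x,z,d,f\}$ or $\{y,d,f\}$ satisfy these constraints while consuming dummies at the same aggregate rate as the pure bundles $\{x,y,z\}$ and $\{d,d',f,f'\}$, so no global counting can separate them. Concretely, take the triples $(x_1,y_1,z_1)$, $(x_2,y_1,z_2)$, $(x_1,y_2,z_2)$, which admit no perfect matching, and set $|D|=|F|=2$. The bundles $\{x_1,z_1,d,f\}$, $\{x_2,y_1,z_2\}$, $\{y_2,d',f'\}$ give every agent value $2=\mathrm{Emax}_k$ in both dimensions. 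The anchor fallback is not specified enough to check. As written, there is no working reduction.

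The idea you are missing is to let the dummies, rather than element items, encode one coordinate, and to value each dummy in \emph{both} dimensions. The paper creates no $X$-items at all. For each $x_a$ that occurs in $s_a$ triples, it adds $s_a-1$ dummies, valued in both dimensions and only by the agents of those triples. Each $y_b$ is an item valued only in dimension~1, and each $z_c$ only in dimension~2, in each case by the triples containing it.

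Each dimension then has exactly $t$ valued items for the $t$ agents, so $\mathrm{Emax}_1=\mathrm{Emax}_2=1$. This needs the assumption that the $X$--$Y$ and $X$--$Z$ projections have perfect matchings; the assumption is checkable in polynomial time and is needed precisely because the dummies are no longer universal.

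In an sEmax allocation every agent then receives exactly one valued item per dimension. A dummy already covers both dimensions, so each bundle is either a single dummy or the agent's own pair $\{y_b,z_c\}$, and mixing is structurally impossible. The $s_a-1$ dummies per $x_a$ leave exactly one non-dummy agent for each $x_a$, and these agents form a perfect matching.
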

At a high level, the reduction represents each triple by an agent and creates items corresponding to the elements of the second and third parts of the 3DM instance, together with auxiliary items for triples that are not selected.
The construction has Emax value one in each dimension, and an allocation attains both values exactly when the agents that receive the two element-items corresponding to their triples form a perfect 3DM matching.

This highlights a sharp jump in complexity: moving from one to two dimensions fundamentally changes the problem from trivial to computationally intractable.

For the non-binary case, we can prove NP-hardness of checking the existence of an sEmax allocation, and hence of determining the maximum number of Emax dimensions, when there are only \emph{two agents} and only two dimensions. 
\begin{restatable}{theorem}{TMSEMAXHARD}\label{thm:sEmax-hard2}
The problems of checking the existence of an sEmax allocation and of determining the maximum number of Emax dimensions are NP-hard even when there are only two agents and only two dimensions. 
\end{restatable}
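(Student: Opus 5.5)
We reduce from \textsc{Partition}: given positive integers $a_1,\dots,a_p$ with $\sum_j a_j=2T$, decide whether some subset sums to exactly $T$. The plan is to build a two-agent, two-dimensional instance whose first dimension is identical across agents and encodes the \textsc{Partition} question through its Emax value. The second dimension is engineered so that its Emax value is trivially attained, but only by allocations that are \emph{forced} into a rigid shape compatible with the first dimension.

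\textbf{Construction.} We use the $p$ number items $g_1,\dots,g_p$ and two additional gadget items $h_1,h_2$.
\begin{itemize}
\item In dimension $1$, both agents value $g_j$ at $a_j$ and each $h_t$ at $0$.
\item In dimension $2$, agent $1$ values only $h_1$ (at $1$) and agent $2$ values only $h_2$ (at $1$); all other dimension-$2$ values are $0$.
\end{itemize}
The Emax values are then easy to compute.
\begin{itemize}
\item $\Emax_2=1$, attained exactly when $h_1\in A_1$ and $h_2\in A_2$.
\item $\Emax_1=T$ if the \textsc{Partition} instance is a yes-instance, and $\Emax_1\le T-1$ otherwise, since the two bundles' dimension-1 values sum to $2T$.
\end{itemize}

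With this gadget the coupling through $h_1,h_2$ is vacuous: the $h_t$ are worthless in dimension $1$, so any Emax$_1$ allocation can be combined with the assignment $h_1\to 1$, $h_2\to 2$ without loss. An sEmax allocation would therefore always exist, and the reduction would prove nothing.

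\textbf{Genuine coupling.} The item sets must interact, and the natural way is a complementary-weight trick. Give each $g_j$ dimension-1 value $a_j$ for both agents, and dimension-2 value $a_j$ for agent $1$ and $B-a_j$ for agent $2$, where $B$ is a large constant. Add padding items so that the Emax value in each dimension is computable in closed form; for instance, $\Emax_2$ should be achievable greedily via fractional reasoning together with a counting constraint. The goal is a construction where:
\begin{itemize}
\item dimension $2$ attains its optimum exactly when agent $1$ receives a set $S$ of cardinality $p/2$;
\item the total $\sum_{j\in S}a_j$ is pinned to a specific value;
\item dimension $1$ attains its optimum only for balanced splits.
\end{itemize}
To make this work, the reduction starts from \textsc{Equal-Cardinality Partition} (NP-hard), so that the cardinality constraint is built in.

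\textbf{Verification.} The proof checks four things.
\begin{enumerate}
\item $\Emax_1\ge$ the target in every instance, witnessed by an explicit allocation that ignores the partition structure.
\item $\Emax_2\ge$ its target in every instance, again witnessed by an explicit allocation.
\item If a balanced equal-cardinality partition exists, it attains both targets simultaneously.
\item Any allocation attaining both targets has dimension-1 and dimension-2 sums that jointly force $\sum_{j\in A_1}a_j=T$ with $|A_1|=p/2$, which yields a partition.
\end{enumerate}
Since two dimensions are involved, NP-hardness of deciding the existence of an sEmax allocation gives NP-hardness of the maximum Emax dimensions problem: the answer is $2$ if and only if an sEmax allocation exists.

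\textbf{Main obstacle.} The hardest step is choosing the weights so that each dimension's Emax value is known independently of the \textsc{Partition} answer, while the simultaneous attainment of both is equivalent to the partition. The difficulty is that identical valuations in a dimension make Emax itself encode the partition, which would make Emax uncomputable and leave the target unknown. The plan is to make the agents' values non-identical in both dimensions, with slack items unequally valued, so that each single-dimension optimum is achieved by many easy allocations. Only the intersection of the two families of optimal allocations should require an exact subset sum.
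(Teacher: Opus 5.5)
There is a genuine gap: the proposal never actually specifies a reduction. You are right to discard the first gadget, since $h_1,h_2$ are worthless in dimension~1 and an sEmax allocation would then always exist. But the ``genuine coupling'' that replaces it is only a wish list. The padding items, the constant $B$, and the way \textsc{Equal-Cardinality Partition} enters are never defined, so none of your verification steps 1--4 can be carried out.

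Moreover, the one concrete ingredient you do fix, both agents valuing $g_j$ at $a_j$ in dimension~1, is exactly the situation your ``Main obstacle'' paragraph says must be avoided. With identical dimension-1 values (and no padding gadget you have actually defined), $\Emax_1$ is $T$ on yes-instances but equals some unknown best unbalanced split on no-instances. You would then have to show that no allocation attaining that unknown value also attains $\Emax_2$, and the sketch gives you no handle on this.

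The missing idea is an asymmetric gadget that pins both Emax values at $T$ regardless of the \textsc{Partition} answer. It is a small change to your first attempt:
\begin{itemize}
\item In dimension~1, only agent~1 values the number items (at $a_j$), and agent~2 values $h_2$ at $T$.
\item In dimension~2, only agent~2 values the number items (at $a_j$), and agent~1 values $h_1$ at $T$.
\item All other values are $0$.
\end{itemize}
In each dimension, the non-owning agent's value is capped at $T$. Giving all number items to the owning agent (and the gadget item to the other) attains $T$, so $\Emax_1=\Emax_2=T$ for every input.

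Attaining $T$ in dimension~1 forces $h_2\in A_2$ and agent~1's number items to sum to at least $T$. Attaining $T$ in dimension~2 forces $h_1\in A_1$ and agent~2's number items to sum to at least $T$. Since the total is $2T$, both sums are exactly $T$, which gives a partition; the converse is immediate.

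This is precisely the structure you were aiming for: many easy optima in each dimension, with an exact subset sum needed only at their intersection. It requires neither a cardinality constraint nor a large constant $B$. Your final remark, that in the two-dimensional case the maximum number of Emax dimensions is $2$ exactly when an sEmax allocation exists, is fine.
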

The proof is by a reduction from \textsc{Partition}. The two dimensions encode the two target sums, and two auxiliary items force any sEmax allocation to correspond to a balanced partition.

Finally, we analyze the worst-case fraction of dimensions that can achieve Emax, as a function of the number of dimensions $\ell$. As in the case of Umax, we show that the fraction is exponentially small with respect to $n$ and $m$.
Specifically, let $T(m,n) = \sum_{i=0}^n(-1)^{n-i}\binom{n}{i}i^m$ denote the number of ways to allocate $m$ items to $n$ agents, where each agent receives at least one item. 
Then, the fraction is $\lceil \ell/T(m,n)\rceil$.

\begin{restatable}{theorem}{THMEMAXFRAC}\label{thm:Emax-fraction}
Fix the number of agents $n$, the number of items $m$, and the number of dimensions $\ell$, with $m\ge n$.
For any MDEA instance, the maximum number of Emax dimensions $\max_{\bA}|\{k\in L\mid \min_{i\in N}v_i(A_i)_k=\Emax_k\}|$ is at least $\lceil \ell/T(m,n)\rceil$.
Moreover, there exists an MDEA instance for which $\max_{\bA}|\{k\in L\mid \min_{i\in N}v_i(A_i)_k=\Emax_k\}|=\lceil \ell/T(m,n)\rceil$. 
\end{restatable}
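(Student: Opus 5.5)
For the lower bound I would use an averaging argument, in the same way as \Cref{thm:Umax-fraction}, but restricted to allocations in which every agent receives at least one item. First I show that for every dimension $k\in L$ some Emax-achieving allocation gives every agent a nonempty bundle. Take any allocation $\bA$ attaining $\Emax_k$. Since $m\ge n$, while some agent has an empty bundle there is another agent holding at least two items. Move one of those items to the agent with the empty bundle. The recipient's value does not decrease, because it rises from $0$ to something nonnegative. The donor keeps at least one item, but its value may drop, so this needs care. The cleaner route is to note that an agent with an empty bundle has value $0$, so $\Emax_k=0$ in that case, and then every allocation attains Emax in dimension $k$, in particular any surjective one. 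So each dimension is attained by at least one of the $T(m,n)$ surjective allocations. Summing over these allocations the number of Emax dimensions each one attains gives at least $\ell$. Hence some surjective allocation attains at least $\lceil \ell/T(m,n)\rceil$ dimensions.

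For the matching upper bound I would construct an instance in which each dimension $k$ has a unique Emax-achieving allocation $\bA^{(k)}$, this allocation is surjective, and the dimensions are spread evenly over the $T(m,n)$ surjective allocations. Enumerate the surjective allocations as $\bA^{(1)},\dots,\bA^{(T)}$ and assign dimension $k$ to allocation $\bA^{(\lceil kT/\ell\rceil)}$, or any near-even map, so that each allocation receives at most $\lceil \ell/T\rceil$ dimensions. Then no allocation attains more than $\lceil \ell/T\rceil$ dimensions, and together with the lower bound this gives equality.

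The crux is choosing, for a fixed surjective allocation $\bA^\ast$, dimension-$k$ valuations that make $\bA^\ast$ the unique max-min allocation. My first attempt is $v_{ijk}=1$ if $g_j\in A^\ast_i$ and $0$ otherwise. This fails: the ESW of $\bA^\ast$ is $\min_i|A^\ast_i|$, and moving extra items between agents can keep the ESW the same, so uniqueness breaks. To fix this I would use distinct powers of two. Order the items, and set $v_{ijk}=w_i 2^{j}$ for $g_j\in A^\ast_i$, and $0$ otherwise. Choose the scaling $w_i$ so that $v_i(A^\ast_i)_k=W$ for the same value $W$ for every agent; this is possible since each bundle is nonempty. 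With these weights every agent sits exactly at $W$ in $\bA^\ast$. Now take any other allocation. Some agent $i$ loses an item of $A^\ast_i$, and because items outside $A^\ast_i$ are worth $0$ to $i$, agent $i$'s value falls strictly below $W$. This simpler scheme, valuing only one's own items and normalizing so every bundle totals exactly $W$, already gives uniqueness without needing powers of two.

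So the key point is the normalization: agent $i$ gives each item of $A^\ast_i$ value $1/|A^\ast_i|$ and every other item value $0$. Then ESW equals $1$ exactly at $\bA^\ast$ and is below $1$ everywhere else. I expect the only real subtlety to be the surjectivity handling in the lower bound, which is resolved by the $\Emax_k=0$ observation.
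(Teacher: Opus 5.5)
Your proposal is correct and follows the paper's proof essentially verbatim. The lower bound averages over the $T(m,n)$ allocations in which every agent receives an item, and the upper bound spreads the dimensions evenly over these allocations using the normalized valuation $v_{ijk}=1/|A_i|$ on the designated bundle, so that the ESW equals $1$ only at that allocation. Your observation that an Emax allocation with an empty bundle forces $\Emax_k=0$ (so every allocation, in particular a surjective one, attains it) is a useful justification of a step the paper simply asserts.
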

The lower bound again follows by averaging. Among the allocations in which every agent receives at least one item, there are $T(m,n)$ possibilities, and hence some allocation achieves Emax in at least $\lceil \ell / T(m,n)\rceil$ dimensions. For the upper bound, we distribute dimensions evenly across these allocations and construct valuations so that each dimension attains Emax under only one allocation, yielding the bound.

This result provides a tight bound on the number of Emax dimensions. As with Umax, only a vanishing fraction of dimensions can be optimized simultaneously, showing that this limitation is inherent even for egalitarian objectives.
Unlike the Umax upper-bound construction, the matching construction for Emax uses non-binary valuations.

\section{Simultaneously Maximizing Efficiency}
In this section, we consider the extent to which the efficiency of each dimension must be relaxed in order to achieve simultaneous efficiency across all dimensions.

\subsection{Approximate sUmax}\label{subsec:approx-sUmax}

We show that an $\alpha$-sUmax$\,1$ allocation always exists when $\alpha=1/\ell$, whereas it may not exist when $\alpha>1/\ell$.

To establish existence, we use a round-robin procedure in which each dimension is treated as a virtual agent.
In the turn of dimension $k$, the procedure selects a remaining item maximizing $u_k(g)=\max_{i\in N}v_i(g)_k$ and assigns it to an agent attaining this maximum.
This process cycles through the dimensions until all items are allocated; the pseudocode is given as Algorithm~\ref{alg:roundrobin-sUmax1} in \Cref{app:proofs}.
This is the same round-robin principle used in the EF1 existence proof of Caragiannis et al.~\cite{caragiannis2019unreasonable}.

\begin{restatable}{theorem}{THMSUMAXEXIST}\label{thm:a-sUmax1-exist}
A $(1/\ell)$-sUmax$\,1$ allocation always exists, and such an allocation can be found in polynomial time.
\end{restatable}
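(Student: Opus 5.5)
The plan is to analyze the round-robin procedure over dimensions described above (Algorithm~\ref{alg:roundrobin-sUmax1}). Polynomial running time is immediate: there are $m$ picks, and each pick scans at most $m$ items and $n$ agents.

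Fix a dimension $k$ and write $u_k(g)=\max_{i\in N}v_i(g)_k$, so that $\Umax_k=\sum_{g\in M}u_k(g)$. Every item picked in a dimension-$k$ turn goes to an agent attaining $u_k$, so it contributes its full value $u_k(g)$ to the dimension-$k$ USW and lies outside $D_k(\bA)$. Items picked in other dimensions' turns contribute a nonnegative amount, and some of them may lie in $D_k(\bA)$. Let $S_k$ be the items picked in dimension $k$'s turns. It therefore suffices to show
\[
\textstyle \sum_{g\in S_k}u_k(g)\ \ge\ \frac{1}{\ell}\Umax_k-\max_{g\in D_k(\bA)}u_k(g),
\]
where the max term is $0$ if $D_k(\bA)$ is empty.

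This is the standard EF1-style round-robin argument, treating dimension $k$ as a virtual agent whose additive valuation is $u_k$. Partition the whole picking sequence into rounds of $\ell$ consecutive picks, starting at dimension $k$'s turn. The round containing the picks before dimension $k$'s first turn (if $k\neq 1$) is a special initial segment of fewer than $\ell$ items. Within each full block that starts with $k$'s pick, $k$ picks first, so its item has $u_k$-value at least that of every other item in the block. Hence $k$ gets at least a $1/\ell$ fraction of each block's $u_k$-value; a final partial block is handled the same way. The only loss comes from the fewer than $\ell$ items picked before $k$'s first turn.

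To charge this loss to a single good, I would use the Caragiannis et al.\ trick: compare $k$'s $t$-th pick with the items picked by other dimensions between $k$'s $(t-1)$-th and $t$-th picks, plus the initial segment. The $\ell-1$ items picked by others after $k$'s $t$-th pick and before its $(t+1)$-th are each worth at most $k$'s $t$-th pick. The items in the initial segment all have $u_k$-value at most the maximum remaining at the time, so their total is at most $(\ell-1)\,u_k(g^*)$, where $g^*$ is the most valuable item among them. Summing gives $\ell\sum_{S_k}u_k\ge \Umax_k-(\ell-1)u_k(g^*)$, which yields the bound $\sum_{S_k}u_k\ge\frac1\ell\Umax_k-\frac{\ell-1}{\ell}u_k(g^*)$. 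This is even stronger than needed, provided $g^*$ can be taken in $D_k(\bA)$.

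The main obstacle is this last point: the charged good must lie in $D_k(\bA)$, not merely in $M$. The fix is to drop from the accounting all items outside $D_k(\bA)$, since they already contribute their full $u_k$-value. Concretely, split $M$ into $M\setminus D_k(\bA)$, which contributes exactly its $u_k$-value to the USW, and $D_k(\bA)$, which contributes at least $0$. Then $S_k\subseteq M\setminus D_k(\bA)$, and I would run the round-robin comparison only on the items of $D_k(\bA)$. Each such item was picked by some other dimension at a time when $k$'s subsequent or preceding pick dominates it, so the same block argument gives
\[
\textstyle \sum_{g\in D_k(\bA)}u_k(g)\le(\ell-1)\sum_{g\in S_k}u_k(g)+(\ell-1)\max_{g\in D_k(\bA)}u_k(g).
\]
Adding $\sum_{M\setminus D_k}u_k$ and using $\sum_{M\setminus D_k}u_k\ge\sum_{S_k}u_k$ then gives
\[
\textstyle \ell\cdot\mathrm{USW}_k\ \ge\ \ell\sum_{M\setminus D_k}u_k\ \ge\ \Umax_k-(\ell-1)\max_{g\in D_k(\bA)}u_k(g),
\]
which implies the $(1/\ell)$-sUmax$\,1$ inequality for dimension $k$. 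Checking that the block charging restricted to $D_k(\bA)$ still works is the step I would verify most carefully.
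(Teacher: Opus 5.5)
Your final argument is correct and is essentially the paper's proof. The paper bounds $\Umax_k\le\sum_{q<k}u_k(g_q)+\ell\sum_{g\in S_k}u_k(g)$ and then drops the prefix items outside $D_k(\bA)$ because they are already counted at full value. That is the same bookkeeping as your split of $M$ along $D_k(\bA)$: each item of $D_k(\bA)$ picked after $k$'s first turn is dominated by $k$'s \emph{preceding} pick (not a subsequent one), and the at most $\ell-1$ earlier ones are charged to $\max_{g\in D_k(\bA)}u_k(g)$.

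One thing must go: your intermediate reduction to $\sum_{g\in S_k}u_k(g)\ge\frac1\ell\Umax_k-\max_{g\in D_k(\bA)}u_k(g)$ is false in general. A counterexample is a prefix item of large $u_k$-value assigned to a dimension-$k$ maximizer, so that $D_k(\bA)=\emptyset$. Delete that reduction and use the lower bound $\sum_{g\in M\setminus D_k(\bA)}u_k(g)$ on the USW instead, as your last paragraph already does.
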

The key point is that, for each dimension $k$, the items selected in the turns of $k$ dominate the later unallocated items in value $u_k(g)=\max_{i\in N}v_i(g)_k$.
Hence, the only possible loss from $\Umax_k$ comes from items selected before the first turn of $k$; after ignoring items already assigned to a dimension-$k$ maximizing agent, this loss can be charged to one item in $D_k(\bA)$.

Conversely, for every positive integer $\ell$, the approximation ratio $1/\ell$ cannot be improved even for the binary case.
\begin{restatable}{theorem}{THMSUMAXNON}\label{thm:a-sUmax1-nonexist}
Fix a positive integer $\ell$.
For any positive $\alpha>1/\ell$, there exists a binary MDEA instance with $\ell$ dimensions such that no allocation satisfies $\alpha$-sUmax$\,1$.
\end{restatable}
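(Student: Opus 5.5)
The plan is to generalize \Cref{ex:umax1-imp} to $\ell$ dimensions. We use $\ell$ agents, each "owning" one dimension, and enough identical items that the additive loss of a single good becomes negligible relative to the multiplicative gap $\alpha-1/\ell$.

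\textbf{Construction.} Take $N=[\ell]$, $L=[\ell]$, and $m$ items, where $m$ is a multiple of $\ell$ to be chosen later. Set $v_{ijk}=1$ if $i=k$ and $v_{ijk}=0$ otherwise. Thus every item is worth $1$ to agent $k$ in dimension $k$ and $0$ everywhere else. The instance is binary.

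\textbf{Basic quantities.} For each $k$, $\max_{i}v_{ijk}=1$, and this maximum is attained only by agent $k$. Hence $\Umax_k=m$, and $D_k(\bA)=M\setminus A_k$. Every item has top value $1$ in every dimension, so the additive term in the definition of $\alpha$-sUmax$\,1$ equals $\min\{1,|M\setminus A_k|\}\le 1$. The USW of $\bA$ in dimension $k$ is exactly $|A_k|$.

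\textbf{Pigeonhole step.} For any allocation, $\sum_k |A_k| = m$, so some dimension $k$ has $|A_k|\le m/\ell$. For that $k$, the $\alpha$-sUmax$\,1$ condition would require
\[
\frac{m}{\ell}\;\ge\;|A_k|\;\ge\;\alpha m-1 .
\]
This fails as soon as $m(\alpha-1/\ell)>1$. So we choose $m$ to be a multiple of $\ell$ with $m>1/(\alpha-1/\ell)$; this is possible precisely because $\alpha>1/\ell$. Then no allocation is $\alpha$-sUmax$\,1$. The case $\ell=1$ is covered by the same inequality: it reads $|A_1|\le m<\alpha m-1$, which holds once $m>1/(\alpha-1)$.

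\textbf{Main point to check.} The only subtle step is confirming that the additive slack never exceeds the value of one good. This holds because binary valuations cap every item's contribution at $1$, so the slack is at most $1$ regardless of how $D_k(\bA)$ looks. Once that is settled, the gap $\alpha m - m/\ell$ grows linearly in $m$ and eventually exceeds this constant slack. No delicate structure is needed beyond choosing $m$ large.
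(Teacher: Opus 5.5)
Your proposal is correct and matches the paper's proof essentially verbatim. Both use the instance with $n=\ell$ agents in which only agent $k$ values items (at value $1$) in dimension $k$, the pigeonhole bound $|A_k|\le m/\ell$ against $\Umax_k=m$ with one-item slack at most $1$, and the choice $m>1/(\alpha-1/\ell)$; your extra requirement that $m$ be a multiple of $\ell$ is harmless but unnecessary.
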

The lower-bound instance gives positive value in dimension $k$ only to agent $k$.
Since some agent receives at most $m/\ell$ items, the corresponding dimension obtains USW at most $m/\ell$, whereas $\Umax_k=m$ and the one-item additive correction is at most $1$; choosing $m$ large rules out every $\alpha>1/\ell$.
These results establish a sharp threshold at $1/\ell$: a constant fraction per dimension is unattainable, and the linear dependence on $\ell$ is inherent.
\subsection{Approximate sEmax}\label{subsec:approx-sEmax}

We show that an $\alpha$-sEmax$\,\ell$ allocation always exists when $\alpha=1/\ell$, whereas for $\alpha>1/\ell$ such an allocation may not exist.

A direct round-robin procedure, which cycles through all agent-dimension pairs and lets each agent pick a favorite remaining item in the corresponding dimension, gives the following weaker guarantee. Since both the approximation ratio and the additive term depend on $n$, this result is mainly a useful baseline; the pseudocode and proof are deferred to \Cref{app:proofs}.

\begin{restatable}{theorem}{THMASEAMEXIST}\label{thm:a-sEmaxnl-exist}
A $\frac{1}{n\cdot\ell}$-sEmax$\,(n\ell-1)$ allocation always exists, and such an allocation can be found in polynomial time.
\end{restatable}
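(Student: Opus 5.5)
The plan is to analyze the direct round-robin procedure over agent-dimension pairs, in the style of the EF1 argument of Caragiannis et al.~\cite{caragiannis2019unreasonable}. Fix an order of the $n\ell$ pairs $(i,k)\in N\times L$. In the turn of pair $(i,k)$, agent $i$ takes a remaining item maximizing $v_i(g)_k$, and the procedure cycles until no item remains. It clearly runs in polynomial time. It then suffices to verify the required inequality for an arbitrary fixed agent $i$ and dimension $k$.

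Fix $(i,k)$. Let $P$ be the set of items picked strictly before the first turn of $(i,k)$; by construction $|P|\le n\ell-1$. Let $x_1,x_2,\dots,x_t$ be the items picked in the turns of $(i,k)$, where possibly $t=0$. Then $M\setminus P$ is partitioned into consecutive blocks $S_1,\dots,S_t$. Block $S_r$ starts with the pick $x_r$ and runs up to, but not including, $x_{r+1}$. Each block has at most $n\ell$ items, and every item in $S_r$ was still available when $x_r$ was chosen. Hence $v_i(x_r)_k\ge v_i(g)_k$ for all $g\in S_r$, and so $v_i(x_r)_k\ge \frac{1}{n\ell}v_i(S_r)_k$.

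Since the $x_r$ and the items of $P\cap A_i$ are distinct members of $A_i$, I get
\[
v_i(A_i)_k\ \ge\ v_i(P\cap A_i)_k+\frac{1}{n\ell}\,v_i(M\setminus P)_k\ \ge\ \frac{1}{n\ell}\,v_i(M)_k-\frac{1}{n\ell}\,v_i(P\setminus A_i)_k .
\]
This also covers $t=0$, where $M=P$. The key step for comparing with the benchmark is the trivial bound $\Emax_k\le v_i(M)_k$: in any allocation, the minimum utility in dimension $k$ is at most agent $i$'s utility, which is at most $v_i(M)_k$. This crude bound is exactly what costs the extra factor $n$ in the ratio, and it is why the theorem is stated with $\frac1{n\ell}$.

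Finally, set $B=P\setminus A_i\subseteq M\setminus A_i$, so $|B|\le n\ell-1$. Then $\frac{1}{n\ell}v_i(B)_k\le\sum_{g_j\in B}\max_{i'\in N}v_{i'jk}$, which is at most the maximum in the definition of $\alpha$-sEmax$\,c$ with $c=n\ell-1$. Combining the displays yields $v_i(A_i)_k\ge\frac{1}{n\ell}\Emax_k-\max_{B\subseteq M\setminus A_i,|B|\le n\ell-1}\sum_{g_j\in B}\max_{i'}v_{i'jk}$ for every $i$ and $k$.

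The only delicate point I anticipate is the bookkeeping for items of $P$ that agent $i$ itself received in other dimensions' turns. These cannot be charged to the additive term, which only allows items outside $A_i$. The plan handles this by keeping $v_i(P\cap A_i)_k$ inside $v_i(A_i)_k$ instead of discarding it.
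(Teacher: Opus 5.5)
Your proposal is correct and follows essentially the same route as the paper. Both use the round-robin over agent--dimension pairs, charge the prefix of at most $n\ell-1$ items picked before the first turn of $(i,k)$ to the additive term while keeping prefix items already in $A_i$ inside $v_i(A_i)_k$, let each pick of $(i,k)$ dominate the at most $n\ell$ items of its block, and invoke the crude bound $\Emax_k\le v_i(M)_k$. The only differences are cosmetic: you allow an arbitrary fixed order of the pairs, and you phrase the domination step via consecutive blocks rather than the paper's explicit index arithmetic.
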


To establish existence with an approximation ratio of $1/\ell$, we instead apply the iterative rounding technique of Gölz and Yaghoubizade~\cite[Theorem 4.2]{golz2025fair}, which guarantees the existence of a PROP$\ell$ allocation for fair division of indivisible items among groups of agents, each of size at most $\ell$. In our context, a group of $\ell$ agents can be regarded as a setting with $\ell$ dimensions. The corresponding pseudocode is given as Algorithm~\ref{alg:sEmax-iterative} in \Cref{app:proofs}.

\begin{restatable}{theorem}{THMSEAXEXIST}\label{thm:a-sEmaxl-exist}
A $\frac{1}{\ell}$-sEmax$\,\ell$ allocation always exists, and such an allocation can be found in polynomial time.
\end{restatable}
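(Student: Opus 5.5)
We will follow the iterative rounding approach of Gölz and Yaghoubizade~\cite[Theorem 4.2]{golz2025fair}, treating each agent $i$ as a group whose members are the $\ell$ dimensions.

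\textbf{Fractional start.} For each $k\in L$ let $\bA^{(k)}$ be an allocation attaining $\Emax_k$ (existence suffices; see the last paragraph for computation). Average the $\ell$ allocations into the fractional assignment $x_{ij}=\frac{1}{\ell}\sum_{k}[g_j\in A^{(k)}_i]$. Then $\sum_i x_{ij}=1$ for every item, and for every pair $(i,k)$,
\[
\textstyle\sum_j x_{ij}v_{ijk}\ge \frac1\ell v_i(A^{(k)}_i)_k\ge \frac1\ell\Emax_k .
\]

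\textbf{Rounding.} Consider the polytope $P$ of nonnegative $x$ satisfying the item constraints $\sum_i x_{ij}=1$ together with the $n\ell$ covering constraints $\sum_j x_{ij}v_{ijk}\ge \frac1\ell\Emax_k$. We round iteratively:
\begin{itemize}
\item fix every integral variable (assigning item $j$ to agent $i$ when $x_{ij}=1$, and deleting variables with $x_{ij}=0$);
\item take a vertex of the residual polytope;
\item whenever a vertex cannot be integral on the remaining items, drop a covering constraint of some pair $(i,k)$ for which the fractional items still available to agent $i$ number at most $\ell$.
\end{itemize}

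\textbf{Which constraint to drop.} A counting argument supplies such a pair. At a vertex, the number of fractional variables is at most the number of tight constraints. Each fractional item has at least two fractional variables, so if every agent had more than $\ell$ fractional items we would get more than $n\ell$ fractional variables beyond the item constraints. This is a contradiction, since only $n\ell$ covering constraints remain.

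Getting the counting exactly right is the main obstacle. The statement needs at most $\ell$ fractional items per agent, rather than at most $\ell$ per agent per dimension. One agent may have several of its dimension constraints dropped over time; the key point is that once agent $i$ has at most $\ell$ fractional items, the same set of items bounds the loss for \emph{all} of $i$'s dimensions. We therefore plan to drop all $\ell$ constraints of agent $i$ at once and to prove the refined counting inequality
\[
\#\text{fractional variables} \le \#\text{fractional items} + \#\{\text{agents with active constraints}\}\cdot \ell .
\]
This yields an agent $i$ whose fractional items are at most $\ell$. We would check this carefully against \cite{golz2025fair}.

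\textbf{Charging the loss.} When agent $i$'s constraints are dropped, its current fractional value in each dimension $k$ is at least $\frac1\ell\Emax_k$ minus its already-fixed value. Afterwards agent $i$ loses at most the value of its at most $\ell$ fractional items, and these items are not eventually given to $i$ (otherwise they would count toward $v_i(A_i)_k$). Hence for every $k$,
\[
\textstyle v_i(A_i)_k\ge \frac1\ell\Emax_k-\sum_{g_j\in B}v_{ijk}\quad\text{for some } B\subseteq M\setminus A_i \text{ with } |B|\le\ell ,
\]
which is bounded above by the $\max_{i'}$ term in the definition. This gives $\frac1\ell$-sEmax$\,\ell$. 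Agents whose constraints are never dropped end integral and satisfy the inequality exactly.

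\textbf{Polynomial time.} Computing $\Emax_k$ is NP-hard, so we avoid it. Replace each target with the proportional-share-type quantity
\[
\textstyle \tau_k=\max\{t : \exists \text{ fractional } x \text{ with } \sum_j x_{ij}v_{ijk}\ge t\ \forall i\},
\]
which is computable by LP. Since $\tau_k\ge\Emax_k$, the averaged LP solution of these $\ell$ fractional optima is feasible with targets $\tau_k/\ell$, and the rounding above applies verbatim. Each iteration fixes a variable or drops constraints, so there are polynomially many LP solves.
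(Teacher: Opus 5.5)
Your overall route is the paper's: average the per-dimension LP optima (your $\tau_k$ is exactly the paper's $\fEmax_k$), then apply the G\"{o}lz--Yaghoubizade iterative rounding with targets $\tau_k/\ell$. The paper uses that rounding as a black box \cite[Theorem~4.2]{golz2025fair}. You instead try to supply its key step yourself, and that step fails as stated.

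Take a stuck vertex (no $x_{ij}=1$) with $F$ fractional variables, $f$ fractional items, and $N'$ the agents with active constraints. The vertex condition gives $F\le f+\ell|N'|$, and the item constraints give $F\ge 2f$. Together these yield only $f\le \ell|N'|$, not that some agent has at most $\ell$ fractional items. Having more than $\ell$ fractional items per agent gives $F>\ell|N'|$, but it says nothing useful about $F-f$, which can be as small as $F/2$.

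The claim is in fact false for $\ell\ge 2$. Take $\ell=2$, four agents, and six items indexed by the edges of $K_4$, each valued (generically, in both dimensions) only by its two endpoint agents. Set $x_{aj}=1/2$ on incident pairs and choose residual targets $t_{ak}=\frac12\sum_{j\ni a}v_{ajk}$, so all eight covering constraints are tight. Eliminating $x_{bj}=1-x_{aj}$ leaves eight equations in six unknowns, which for generic values have full column rank. So this point is a vertex, yet every agent has three fractional items. Your procedure then has nothing integral to fix and no agent eligible to drop.

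The repair is to count fractional \emph{mass} rather than fractional items. At a stuck vertex every remaining item $j\in R$ is fractional. With $\mu_i=\sum_{j\in R}x_{ij}$ we get $\sum_{i\in N'}\mu_i\le\sum_{i\in N}\mu_i=|R|=f\le\ell|N'|$, so some active agent has $\mu_i\le\ell$. In the $K_4$ example every $\mu_a=3/2$. If $N'=\emptyset$, then $F\le f$ together with $F\ge 2f$ forces $f=0$.

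Drop all $\ell$ constraints of such an agent, keeping it as an unconstrained recipient so the current point stays feasible. Its eventual shortfall in dimension $k$ is at most $\sum_{j\in R\setminus A_i}x_{ij}v_{ijk}$. Since $0\le x_{ij}\le 1$ and $\sum_{j\in R\setminus A_i}x_{ij}\le\mu_i\le\ell$, this is at most the sum of the $\ell$ largest values $v_{ijk}$ over $R\setminus A_i\subseteq M\setminus A_i$. That is bounded by the additive term in the definition of $\frac1\ell$-sEmax$\,\ell$.

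With this mass criterion in place of ``at most $\ell$ fractional items,'' the rest of your plan goes through and yields the theorem: the charging argument, termination after at most $m+n$ rounds of LP solving, and replacing $\Emax_k$ by $\tau_k\ge\Emax_k$.
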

The proof starts from optimal fractional Emax allocations, one for each dimension.
Averaging these fractional solutions guarantees each agent a $1/\ell$ fraction of the fractional optimum in every dimension.
Iterative rounding then converts the averaged fractional allocation into an integral one, losing value only on at most $\ell$ items outside each relevant bundle.

Conversely, for every positive integer $\ell$, the approximation ratio $1/\ell$ cannot be improved even for the binary case.
\begin{restatable}{theorem}{THMSEMAXNONEXIST}\label{thm:a-sEmax1-nonexist}
Fix a positive integer $\ell$.
For any positive $\alpha>1/\ell$ and positive integer $c$, there exists a binary MDEA instance with $\ell$ agents and $\ell$ dimensions such that no allocation satisfies $\alpha$-sEmax$\,c$.
\end{restatable}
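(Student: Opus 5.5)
The plan is to generalize \Cref{ex:emax1-imp} with a cyclic (Latin-square) construction, in which every item is useful to exactly one agent in each dimension and to each agent in exactly one dimension. The point is that every agent-dimension pair then competes for the same pool of items. Index agents, dimensions and item types by $\{0,1,\dots,\ell-1\}$ and work modulo $\ell$. For a large integer $T$ to be chosen later, create $T$ items of each type $j$, so $m=\ell T$. An item $g$ of type $j$ gets value $1$ from agent $i$ in dimension $k$ if $i\equiv j+k \pmod \ell$, and value $0$ otherwise. The instance is binary and has $\ell$ agents and $\ell$ dimensions. Since every item has exactly one agent valuing it $1$ in each dimension, $\max_{i'}v_{i'jk}=1$ for all items and dimensions.

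\textbf{Step 1: compute $\Emax_k$.} For a fixed dimension $k$, give all items of type $j$ to agent $j+k$. Each agent then receives exactly $T$ items that she values $1$ in dimension $k$, so $\Emax_k\ge T$. Conversely, in dimension $k$ each item contributes value $1$ to exactly one agent, so the total value summed over agents is at most $\ell T$ and the minimum is at most $T$. Hence $\Emax_k=T$ for every $k$.

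\textbf{Step 2: averaging over all agent-dimension pairs.} Fix any allocation $\bA$ and let $x_{ij}$ be the number of type-$j$ items in $A_i$. Then $v_i(A_i)_k=x_{i,i-k}$. For fixed $i$, as $k$ ranges over all dimensions, $i-k$ ranges over all types, so
\[
\textstyle\sum_{i}\sum_{k} v_i(A_i)_k=\sum_i\sum_j x_{ij}=\ell T .
\]
Since there are $\ell^2$ pairs $(i,k)$, some pair satisfies $v_i(A_i)_k\le T/\ell$.

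\textbf{Step 3: choose $T$ to defeat the additive term.} Because the valuations are binary, the additive correction in the definition of $\alpha$-sEmax$\,c$ is at most $c$. So $\alpha$-sEmax$\,c$ would require $T/\ell\ge \alpha T-c$, i.e.\ $T(\alpha-1/\ell)\le c$. Choosing an integer $T>c/(\alpha-1/\ell)$, which is possible since $\alpha>1/\ell$, yields a contradiction, so no allocation satisfies $\alpha$-sEmax$\,c$.

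The only delicate point is the design in Step 2: the valuations must be arranged so that each item's value is spread over all $\ell$ agent-dimension pairs of a single agent. This forces the uniform $1/\ell^2$ averaging bound rather than a $1/\ell$ per-dimension bound. The cyclic shift is what makes each item count exactly once for each agent across dimensions, and once it is fixed the remaining steps are routine.
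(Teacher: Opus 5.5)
Your proposal is correct and is essentially the paper's proof: the same cyclic binary instance with $\ell$ agents, $\Emax_k=m/\ell$, and an averaging argument. The paper's condition $m>c\ell/(\alpha-1/\ell)$ is exactly your $T>c/(\alpha-1/\ell)$. The only cosmetic difference is the order of averaging. You average over all $\ell^2$ agent--dimension pairs at once, whereas the paper first picks an agent with $|A_{i^*}|\le m/\ell$ and then averages over that agent's dimensions; both yield the same $T/\ell$ bound.
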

The construction is a cyclic binary instance in which, for every dimension, each item is valuable to exactly one agent, and the responsible agent shifts with the dimension.
Although $\Emax_k=m/n$ for every dimension, any allocation gives some agent at most $m/n$ items; averaging over the $\ell$ dimensions for this agent yields a dimension with value at most $m/(n\ell)$, so no $\alpha>1/\ell$ guarantee survives after removing any fixed number $c$ of items.

These results establish a tight $1/\ell$ threshold for Emax: even with additive relaxations, the dependence on the number of dimensions is unavoidable.

\subsection{Other Notions}
Finally, we discuss other possible measures of efficiency across multiple dimensions.

The problem of maximizing the sum of USWs over all dimensions can be solved greedily by assigning each item $g$ to an agent $i$ who maximizes $\sum_{k\in L}v_{ijk}$.
\begin{theorem}\label{thm:total-USWs}
The problem of maximizing the sum of USWs over all dimensions is solvable in polynomial time.
\end{theorem}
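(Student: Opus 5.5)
The plan is to show that the objective separates item by item, so maximizing it reduces to an independent choice for each item. For any allocation $\bA$, write the sum of USWs over all dimensions and exchange the order of summation:
\[
\sum_{k\in L}\sum_{i\in N} v_i(A_i)_k=\sum_{i\in N}\sum_{g_j\in A_i}\sum_{k\in L} v_{ijk}=\sum_{g_j\in M} w_{\sigma(j)j},
\]
where $w_{ij}\coloneqq\sum_{k\in L}v_{ijk}$ and $\sigma(j)$ is the agent receiving $g_j$ under $\bA$. This uses only additivity within each dimension and the fact that $\bA$ is a partition of $M$. Since every item goes to exactly one agent, allocations correspond bijectively to maps $\sigma\colon M\to N$, and the objective is a sum of terms each depending on a single $\sigma(j)$.

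Next, I would conclude that the maximum is $\sum_{g_j\in M}\max_{i\in N} w_{ij}$. It is an upper bound, since each term satisfies $w_{\sigma(j)j}\le\max_i w_{ij}$. It is attained by the greedy allocation that gives each $g_j$ to some agent in $\argmax_{i\in N} w_{ij}$, breaking ties arbitrarily. This is precisely the aggregated analogue of the single-dimensional Umax greedy observation.

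For the running time, computing all $w_{ij}$ takes $O(nm\ell)$ time, and selecting an argmax for each item takes $O(nm)$ further time, so the algorithm is polynomial. No step is a genuine obstacle. The only point requiring care is the interchange of sums, which is justified by the additivity of each dimension's valuation together with the partition property of allocations.
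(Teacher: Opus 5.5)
Your proof is correct and takes the same approach as the paper. The paper likewise assigns each item $g_j$ to an agent maximizing $\sum_{k\in L}v_{ijk}$; your interchange-of-summation argument spells out the optimality justification that the paper leaves implicit.
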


On the other hand, the problem of maximizing the minimum of USWs over all dimensions (i.e., $\min_{k\in L}\sum_{i\in N} v_{i}(A_i)_k$) is NP-hard to approximate, even in the binary case, as shown below.

\begin{restatable}{theorem}{THMSUMAXGE}\label{thm:sUmax-ge1}
Checking the existence of an allocation $\bA$ such that $\min_{k\in L}\sum_{i\in N}v_i(A_i)_k\ge 1$ is NP-complete, even when the valuations are binary. 
Moreover, maximizing the same objective is NP-hard to approximate, even when the valuations are binary. 
\end{restatable}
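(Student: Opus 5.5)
We reduce from \textsc{Set Cover} (equivalently, \textsc{Hitting Set}). Consider a universe $U=\{1,\dots,u\}$, a family of sets $S_1,\dots,S_t\subseteq U$, and a budget $b$. The goal is a binary instance in which each item is ``spent'' on one dimension-pattern, so that achieving USW at least $1$ in every dimension amounts to covering all dimensions with few chosen sets.

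The construction has one dimension per element $e\in U$, $n=t$ agents (one per set), and $m=b$ items, all of them identical. Agent $i$ values every item with $1$ in dimension $e$ exactly when $e\in S_i$, and $0$ otherwise. Since the items are indistinguishable, an allocation is determined by the set $I=\{i : A_i\neq\emptyset\}$ of agents that receive at least one item, and $|I|\le b$. The USW in dimension $e$ is then $\sum_{i\in I}|A_i|\cdot[e\in S_i]$, which is at least $1$ if and only if $e$ is covered by $\{S_i\}_{i\in I}$.

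The two directions of the equivalence follow directly. Suppose a cover of size at most $b$ exists; pad it to exactly $b$ sets by repetition and give one item to each chosen agent. Conversely, any allocation with all USWs at least $1$ yields a cover of size at most $b$. Membership in NP is immediate, since USWs can be computed in polynomial time. This proves NP-completeness.

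For inapproximability, the reduction above does not keep the objective bounded, so we need a gap for the minimum. The plan is to reduce instead from \textsc{Exact Cover by 3-Sets}, or from an NP-hard gap version of set cover, using the same gadget. In this reduction, a yes-instance gives minimum USW at least $1$, while a no-instance gives some uncovered dimension, hence minimum USW equal to $0$. Any approximation algorithm with a positive ratio must output a positive value on yes-instances, and the NP-completeness above then shows that distinguishing $\ge 1$ from $0$ is NP-hard. So no polynomial-time algorithm achieves any positive approximation factor unless P$=$NP.

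The main obstacle is checking that the objective values are truly in $\{0\}\cup[1,\infty)$, so that the $0$ versus $\ge1$ gap suffices. This holds because valuations are binary, so every USW is a nonnegative integer. No finer gap is then needed, and the whole argument reduces to the correctness of the cover equivalence.
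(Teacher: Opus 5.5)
Your proposal is correct and is essentially the paper's proof: your \textsc{Set Cover} reduction is exactly the paper's \textsc{Hitting Set} reduction under the usual set/element duality (one agent per selectable set, one dimension per element to be covered, and $b$ identical items), and the inapproximability follows from the $0$ versus $\ge 1$ gap, as in the paper. The detour about switching to \textsc{Exact Cover by 3-Sets} or to a gap version of \textsc{Set Cover} is unnecessary, since the same reduction already gives that gap, as your final paragraph in effect concludes.
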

The reduction is from \emph{Hitting Set}: agents represent universe elements, dimensions represent sets, and the $h$ items select at most $h$ agents.
An allocation achieves value at least one in every dimension exactly when the selected agents hit all sets.
The same gap gives the inapproximability statement.

If we do not restrict the problem to the binary case, the problem remains NP-hard even when $n=\ell=2$.
\begin{restatable}{theorem}{THMSUMAXLL}\label{thm:sUmax-22}
Even when the numbers of agents and dimensions are two, finding an allocation $\bA$ that maximizes $\min_{k\in L}\sum_{i\in N}v_i(A_i)_k$ is NP-hard.
\end{restatable}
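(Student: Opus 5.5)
We reduce from \textsc{Partition}, in the same spirit as \Cref{thm:sEmax-hard2}. Let $a_1,\dots,a_p$ be positive integers with $\sum_j a_j=2S$; the question is whether some $J\subseteq[p]$ has $\sum_{j\in J}a_j=S$. We build an instance with two agents and two dimensions, with one item $g_j$ for each $j\in[p]$. Agent~1 values $g_j$ at $(a_j,0)$ and agent~2 values it at $(0,a_j)$.

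Let $J$ be the set of indices of items given to agent~1. The USW in dimension~1 is then $\sum_{j\in J}a_j$ and in dimension~2 it is $2S-\sum_{j\in J}a_j$. Hence the objective $\min_{k}\sum_i v_i(A_i)_k$ equals $\min\bigl(\sum_{j\in J}a_j,\,2S-\sum_{j\in J}a_j\bigr)$. This quantity is at most $S$, with equality exactly when $\sum_{j\in J}a_j=S$.

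Consequently, the optimum of the maximization problem is $S$ if and only if the \textsc{Partition} instance is a yes-instance. A polynomial-time algorithm returning an optimal allocation would therefore decide \textsc{Partition}: we compute its value and compare it with $S$. Since \textsc{Partition} is (weakly) NP-hard, the problem is NP-hard even with $n=\ell=2$.

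The only step requiring care is checking that the two agents' values are coupled exactly as claimed. Every item contributes its full value $a_j$ to exactly one dimension, namely the dimension of the agent who receives it. This follows directly from the construction, since each agent values items in only one dimension. Note also that the hardness is only in the weak sense: the values $a_j$ must be given in binary. This is consistent with the binary case being handled separately by \Cref{thm:sUmax-ge1}, where $n$ and $\ell$ are not bounded.
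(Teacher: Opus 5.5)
Your proposal is correct and is the paper's own reduction: Partition with $v_{ijk}=a_j$ if $i=k$ and $0$ otherwise. In both, the objective equals $\min(\sum_{j\in J}a_j,\,2S-\sum_{j\in J}a_j)$ and reaches $S$ exactly when $J$ is a balanced partition.
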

The proof is again by a reduction from \textsc{Partition}, using the two dimensions to encode the two sides of the partition.

Nevertheless, if each valuation is an integer bounded by a constant (i.e., in $\{0,1,\dots,c\}$ for some constant $c$) and the number of dimensions $\ell$ is fixed, the problem of maximizing the minimum of USWs over all dimensions is solvable in polynomial time by a dynamic programming approach.
Specifically, one constructs a table of possible $\ell$-dimensional USW vectors for each prefix set of items $\{g_1,g_2,\dots,g_j\}$ for every $j\in[m]$.
Since the number of possible USW vectors is at most $(cj+1)^\ell$, this is polynomial for fixed $\ell$, with the exponent depending on $\ell$. 
\begin{theorem}
  There is a polynomial-time algorithm for the problem of maximizing the minimum of USWs over all dimensions if each valuation is an integer bounded by a constant and the number of dimensions is fixed.
\end{theorem}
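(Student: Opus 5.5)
The plan is a dynamic program over items that records, for each prefix $\{g_1,\dots,g_j\}$, the set $S_j\subseteq\mathbb{Z}_{\ge 0}^{\ell}$ of USW vectors reachable by some assignment of these items. Let $w_k=\sum_{i\in N}v_i(A_i)_k$ denote the dimension-$k$ USW of an allocation.

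First, because USW is additive over items, an assignment of $\{g_1,\dots,g_j\}$ produces the vector $w=\sum_{j'\le j} v_{i_{j'}j'}$, where $i_{j'}$ is the agent receiving $g_{j'}$. This gives the base case $S_0=\{\mathbf 0\}$ and the recurrence
\[
S_j=\{\,w+(v_{ij1},\dots,v_{ij\ell}) \mid w\in S_{j-1},\ i\in N\,\}.
\]
Correctness follows by induction on $j$. The set $S_m$ is then exactly the set of USW vectors of all allocations of $M$. We output $\max_{w\in S_m}\min_{k\in L}w_k$, and recover an allocation by storing, for each $w\in S_j$, a predecessor vector together with the agent used.

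Second, we bound the running time. Every coordinate of a vector in $S_j$ is an integer in $\{0,\dots,cj\}$, so $|S_j|\le (cj+1)^\ell\le (cm+1)^\ell$. Computing $S_j$ from $S_{j-1}$ costs $O(|S_{j-1}|\cdot n\cdot \ell)$ operations, plus deduplication. We can deduplicate by indexing a boolean array of size $(cm+1)^\ell$, or by hashing. The total time is therefore $O(m\,n\,\ell\,(cm+1)^\ell)$, which is polynomial in $n$ and $m$ when $c$ and $\ell$ are constants.

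The main subtlety is this counting step: the number of states is polynomial only because the values are bounded integers and $\ell$ is fixed. Apart from that, correctness of the recurrence is routine, so I expect no real obstacle.
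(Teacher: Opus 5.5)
Your proposal is correct and follows essentially the same approach as the paper: a dynamic program over prefixes $\{g_1,\dots,g_j\}$ that stores the reachable $\ell$-dimensional USW vectors, of which there are at most $(cj+1)^\ell$, a polynomial number for fixed $\ell$ and constant $c$. You also spell out the recurrence, recovering an allocation through predecessor pointers, and the explicit $O(mn\ell(cm+1)^\ell)$ running-time bound, all of which the paper leaves implicit.
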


For ESW, consider two problems: maximizing the sum of ESWs over all dimensions (i.e., $\sum_{k\in L}\min_{i\in N} v_{i}(A_i)_k$) and maximizing the minimum of ESWs over all dimensions (i.e., $\min_{k\in L}\min_{i\in N} v_{i}(A_i)_k$). 
However, since computing an Emax allocation is NP-hard already in the one-dimensional two-agent case with ternary valuations~\cite{fitzsimmons2024hardness}, both of these problems are NP-hard even when the number of dimensions is one.
Moreover, \Cref{thm:sEmax-hard} implies that the problem of maximizing the minimum ESW across dimensions is NP-hard even to approximate. Therefore, egalitarian objectives are inherently hard, and this difficulty is further amplified in multidimensional settings.
\begin{theorem}\label{thm:sEmax-g1}
Checking the existence of an allocation $\bA$ such that $v_i(A_i)_k\ge 1$ for every $i\in N$ and $k\in L$ is NP-complete, even when there are two dimensions and the valuations are binary.
Moreover, maximizing $\min_{k\in L}\min_{i\in N}v_i(A_i)_k$ is NP-hard to approximate, even when there are two dimensions and the valuations are binary.
\end{theorem}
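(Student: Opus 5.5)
Both parts of the statement follow from the reduction behind \Cref{thm:sEmax-hard}, so the plan is to reuse that instance rather than build a new one. That reduction, from 3-dimensional matching, produces a binary instance with two dimensions in which $\Emax_1=\Emax_2=1$. Moreover, an allocation attains ESW $1$ in both dimensions if and only if the 3DM instance has a perfect matching. An allocation with $v_i(A_i)_k\ge 1$ for all $i\in N$ and $k\in L$ is exactly an allocation whose ESW is at least $1$ in both dimensions. Since each $\Emax_k=1$, this is the same as being sEmax in that instance. Hence the reduction already shows NP-hardness of the decision problem in the first sentence. Membership in NP is immediate, because a certificate is an allocation and we can check $nm\ell$ sums directly.

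First, I would record the two properties of the construction that are needed. (a) In the yes case, some allocation gives every agent value at least $1$ in each dimension. (b) In the no case, every allocation leaves some agent with value $0$ in some dimension. Property (b) holds because the valuations are binary, so any value below $1$ is exactly $0$. If the proof of \Cref{thm:sEmax-hard} is written only in terms of ``attains $\Emax_k$'', I would restate it in these terms. The restatement costs nothing, since attaining $\Emax_k=1$ is the same as every agent having value at least $1$ in dimension $k$.

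Second, the inapproximability follows from the gap between (a) and (b). Let $\mathrm{OPT}=\max_{\bA}\min_{k\in L}\min_{i\in N}v_i(A_i)_k$. In the yes case $\mathrm{OPT}\ge 1$, and in the no case $\mathrm{OPT}=0$. Suppose a polynomial-time algorithm achieved any multiplicative factor $\alpha>0$, even one depending on the input size. In the yes case it would output an allocation of value at least $\alpha>0$, which by integrality is at least $1$. In the no case it would output value $0$. So the algorithm would decide 3DM, and no such algorithm exists unless P$=$NP.

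The only real obstacle is checking that the construction of \Cref{thm:sEmax-hard} actually gives the strong threshold property. We need more than ``some allocation fails to match $\Emax$ in one dimension'': in the no case, \emph{every} allocation must give some agent zero in some dimension. I would verify this directly from the gadget structure. Each agent, corresponding to a triple, gets positive value in dimension $1$ only from its element item for the second part or from the auxiliary items, and in dimension $2$ similarly from its element item for the third part. If the number of auxiliary items is set so that only the agents for non-selected triples can be covered by them, then every agent reaching value $1$ in both dimensions forces a perfect matching.
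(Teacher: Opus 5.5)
Your proposal is correct and follows the paper's own argument. Like the paper, you reuse the binary two-dimensional 3DM instance from \Cref{thm:sEmax-hard}, where $\Emax_1=\Emax_2=1$ makes ``value at least $1$ for every agent and dimension'' equivalent to sEmax, and binary integrality gives the $1$-versus-$0$ gap that rules out any multiplicative approximation. The ``obstacle'' in your last paragraph needs no separate check: your property (b) already settles it, because having no sEmax allocation means every allocation has ESW $0$ in some dimension.
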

This follows from the same 3DM reduction as \Cref{thm:sEmax-hard}: in that construction, the Emax value is one in both dimensions, so achieving value at least one for every agent and dimension is exactly the existence of an sEmax allocation. The resulting gap between value one and value zero gives the inapproximability statement.

\section{Pareto Optimality}
In this section, we examine the concepts of PO-agent, PO-USW, and PO-ESW.  

\subsection{Relationship between PO Notions}

We observe relationships between PO-agent, PO-USW, and PO-ESW.

First, PO-USW and PO-ESW are fundamentally distinct concepts, and they are not necessarily compatible.  
\begin{restatable}{proposition}{PROPOUESW}\label{prop:po-usw-esw}
There exists a single-dimensional MDEA instance in which no allocation simultaneously satisfies both PO-USW and PO-ESW.
\end{restatable}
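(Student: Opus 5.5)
Since $\ell=1$ here, PO-USW and PO-ESW collapse to single-objective comparisons. An allocation is PO-USW exactly when it maximizes USW, i.e., attains $\Umax_1$. An allocation is PO-ESW exactly when it attains $\Emax_1$. The reason is that with one dimension, any allocation that is not optimal is strictly dominated by an optimal one, and any optimal allocation cannot be strictly improved upon. So it suffices to exhibit a single-dimensional instance in which no allocation is simultaneously USW-maximizing and ESW-maximizing.

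I would use two agents and two items $g_1,g_2$, with valuations $v_1(g_1)=2$, $v_1(g_2)=2$, $v_2(g_1)=1$, $v_2(g_2)=0$.

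\emph{USW side.} The unique USW-maximizing allocation gives both items to agent 1, since agent 1 is the unique strict maximizer for each item. It attains USW $4$ and ESW $0$. Any other allocation has USW at most $3$.

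\emph{ESW side.} The allocation $A_1=\{g_2\}$, $A_2=\{g_1\}$ attains ESW $\min(2,1)=1>0$. Hence every ESW-maximizing allocation has ESW at least $1$, so it must give agent 2 the item $g_1$ (agent 2 values nothing else). Such an allocation is not USW-maximal.

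\emph{Conclusion.} No allocation is both PO-USW and PO-ESW. The only step requiring care is the reduction of PO to optimality in one dimension. For PO-ESW, I would verify it directly from the definition. If $\bA$ is not ESW-optimal, an ESW-optimal $\bB$ satisfies condition (i) and strict condition (ii). Conversely, if $\bA$ is ESW-optimal, no $\bB$ can have strictly larger ESW. The same argument works verbatim for USW. I expect no real obstacle; the main thing is choosing the instance so that the uniqueness claims are checkable by exhaustive inspection of the four allocations.
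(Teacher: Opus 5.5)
Your proof is correct and is essentially the paper's own argument: your instance is the paper's instance with $g_1$ and $g_2$ swapped. In both, the unique USW-maximizer gives both items to agent~1 and has ESW $0$, while the unique ESW-maximizer gives agent~2 the item she values. You also justify explicitly that PO-USW and PO-ESW reduce to USW- and ESW-optimality when $\ell=1$, a step the paper leaves implicit.
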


Comparing PO-agent and PO-USW, we show that PO-USW is stronger than PO-agent.
\begin{restatable}{theorem}{THMUSWIM}\label{thm:USW-implication}
In every MDEA instance, every PO-USW allocation is also PO-agent.
However, some MDEA instances admit PO-agent allocations that are not PO-USW.
\end{restatable}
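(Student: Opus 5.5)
The implication goes by contrapositive, and a small explicit instance gives the separation.

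\emph{Implication.} Let $\bA$ be an allocation that is not PO-agent. Then some allocation $\bB$ satisfies $v_i(B_i)_k\ge v_i(A_i)_k$ for every $i\in N$ and $k\in L$, with strict inequality for some pair $(i^*,k^*)$. Fix a dimension $k$ and sum the inequalities over all agents. This gives
\[
\textstyle\sum_{i\in N}v_i(B_i)_k\ge\sum_{i\in N}v_i(A_i)_k \quad\text{for every } k\in L.
\]
For $k=k^*$, one summand is strictly larger and the rest are weakly larger. Hence $\sum_{i}v_i(B_i)_{k^*}>\sum_i v_i(A_i)_{k^*}$. So $\bB$ USW-dominates $\bA$, and $\bA$ is not PO-USW. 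This proves that PO-USW implies PO-agent.

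\emph{Separation.} Take one dimension, two agents, and a single item $g_1$ with $v_{111}=1$ and $v_{211}=2$.
\begin{itemize}
\item Let $\bA$ give $g_1$ to agent 1, so $A_1=\{g_1\}$ and $A_2=\emptyset$.
\item $\bA$ is PO-agent. The only other allocation gives $g_1$ to agent 2, which lowers agent 1's value from $1$ to $0$.
\item $\bA$ is not PO-USW. Giving $g_1$ to agent 2 raises the USW from $1$ to $2$.
\end{itemize}
This instance shows that the converse fails.

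No step is a real obstacle. The only care needed is to notice that the strict inequality survives summation because all other terms are weakly nonnegative differences. In the counterexample, one must check every alternative allocation, which is trivial with a single item.
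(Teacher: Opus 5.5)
Your proposal is correct and follows the paper's proof: the implication uses the same contrapositive argument of summing the agent-wise dominance inequalities in each dimension. Your separating example is the paper's one-item, one-dimension instance up to relabeling the agents: giving the item to the agent who values it less is PO-agent but USW-dominated.
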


It may seem that the same relationship holds between PO-agent and PO-ESW as between PO-agent and PO-USW.
However, the two concepts are incomparable.
Nevertheless, we can show that there always exists an allocation that simultaneously satisfies both PO-agent and PO-ESW.
\begin{restatable}{theorem}{THMESWIM}\label{thm:ESW-implication}
PO-agent and PO-ESW are incomparable.
However, for every MDEA instance, there exists an allocation that satisfies both PO-agent and PO-ESW simultaneously.
\end{restatable}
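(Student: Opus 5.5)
The statement has three parts: PO-agent does not imply PO-ESW, PO-ESW does not imply PO-agent, and an allocation satisfying both always exists. I would give two small counterexamples for the first two and then a lexicographic-optimization argument for existence.

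\emph{PO-agent does not imply PO-ESW.} Take one dimension and two agents, and let one item be valued $1$ by agent~2 only and another be valued $1$ by agent~1 only. Giving both items to agent~1 is PO-agent: any allocation that weakly improves both agents must keep agent~1's value at $1$, so her item stays with her. If agent~2's item moves to agent~2, agent~2 gains $0$, so no one strictly improves; and every other allocation lowers agent~1. The ESW of this allocation is $0$, whereas assigning each item to the agent who values it gives ESW $1$. Hence it is not PO-ESW. To make the PO-agent check airtight I would instead use an instance where agent~1 values every item positively, so that giving all items to agent~1 is PO-agent because any change lowers agent~1. For example, take two items, each valued $1$ by both agents; the all-to-agent-1 allocation has ESW $0<1$.

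\emph{PO-ESW does not imply PO-agent.} Take one dimension, two agents, and items $a,b,c$. Agent~1 values $a$ at $1$; agent~2 values $b$ at $1$; item $c$ is valued $1$ by agent~1 only. The allocation $(\{a\},\{b,c\})$ has ESW $1$. Item $c$ is worthless to agent~2, so agent~2 can receive at most $b$ and the maximum ESW is $1$; thus this allocation is PO-ESW. It is not PO-agent, because moving $c$ to agent~1 gives $(2,1)\ge(1,1)$ with a strict increase.

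\emph{Existence.} Choose $\bA$ that is Pareto-optimal for the ESW vector $(\min_i v_i(A_i)_k)_{k\in L}$ and, among all such allocations, maximizes $\sum_{i,k} v_i(A_i)_k$. The existence of $\bA$ follows from two facts. First, a PO-ESW allocation exists, since the set of allocations is finite and a maximizer of $\sum_k \min_i v_i(A_i)_k$ is PO-ESW. Second, the finite set of PO-ESW allocations is nonempty, so the second maximization is over a nonempty finite set.

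Suppose $\bB$ PO-agent-dominates $\bA$. Then every $\min_i v_i(B_i)_k\ge \min_i v_i(A_i)_k$. If some of these inequalities is strict, $\bB$ ESW-dominates $\bA$, contradicting PO-ESW. Otherwise the ESW vectors coincide. In that case $\bB$ is also PO-ESW, since anything ESW-dominating $\bB$ would dominate $\bA$. But $\bB$ has a strictly larger total $\sum_{i,k}v_i(B_i)_k$, which contradicts the choice of $\bA$.

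The main obstacle is this last step, checking that $\bB$ inherits PO-ESW. Equal ESW vectors make the domination relations transfer from $\bA$ to $\bB$, so the step goes through.
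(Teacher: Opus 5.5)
Your proposal is correct, with one caveat. Your first example for ``PO-agent does not imply PO-ESW'' is wrong, not merely insufficiently airtight. Suppose you hand the item that only agent~2 values over to agent~2. Agent~1 stays at $1$ and agent~2 rises from $0$ to $1$, which is a strict agent-Pareto improvement, so that allocation is not PO-agent. Discard it and keep only the replacement (two items, each valued $1$ by both agents), which works. Your PO-ESW-but-not-PO-agent example is also fine. The paper instead uses a single instance, with valuations $(3,2,1)$ and $(1,3,0)$, that exhibits both separations and also an allocation satisfying both notions.

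For existence, your route is the mirror image of the paper's. The paper restricts attention to PO-agent allocations and takes one that is maximal under ESW-dominance within that set. It then shows this allocation is globally PO-ESW. If some $\bB$ ESW-dominated it, apply agent-Pareto improvements to $\bB$ until a PO-agent allocation is reached. Each step only weakly raises every ESW coordinate, so the result is a PO-agent allocation that ESW-dominates the chosen one, a contradiction.

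You instead restrict to PO-ESW allocations and break ties by the strictly agent-monotone potential $\sum_{i,k}v_i(A_i)_k$. The same key fact (agent-dominance implies weak ESW-dominance) then forces any agent-improvement to keep the ESW vector unchanged. Such an improvement therefore remains PO-ESW while having a larger total, contradicting your choice. Your version avoids the improvement-chain argument and can be collapsed to a single scalarization: any lexicographic maximizer of $\bigl(\sum_k\min_i v_i(A_i)_k,\ \sum_{i,k}v_i(A_i)_k\bigr)$ is both PO-ESW and PO-agent. The paper's version needs no potential function and works purely with the two dominance orders.
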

The existence part follows by taking a PO-agent allocation that is maximal with respect to PO-ESW among all PO-agent allocations; any ESW improvement can be followed by agent-Pareto improvements, contradicting this maximality.

\subsection{Checking Pareto Optimality}
The multidimensional efficiency measures considered above (e.g., maximum Umax dimensions, $\alpha$-sUmax$\,1$) are compatible with PO-agent because these properties are preserved under Pareto improvements.
We therefore ask whether the algorithms studied above can guarantee PO-agent, PO-USW, or PO-ESW.

First, note that the algorithm for \Cref{thm:total-USWs} gives a PO-agent and PO-USW allocation, since it maximizes the sum of utilities with respect to agents and USWs.
\begin{observation}\label{obs:compute-PO}
A PO-agent and PO-USW allocation can be computed in polynomial time.
\end{observation}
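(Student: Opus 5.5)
The plan is to show that the greedy allocation from \Cref{thm:total-USWs}, which gives each item $g_j$ to an agent maximizing $\sum_{k\in L}v_{ijk}$, is both PO-USW and PO-agent. Since \Cref{thm:USW-implication} states that every PO-USW allocation is PO-agent, it will be enough to prove PO-USW. Polynomial running time is clear: the algorithm makes one pass over the $m$ items, computing $n$ sums of $\ell$ numbers for each.

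For PO-USW, let $\bA$ be the greedy allocation and suppose some allocation $\bB$ Pareto-dominates it in USW vectors. Then $\sum_i v_i(B_i)_k\ge\sum_i v_i(A_i)_k$ for every $k\in L$, with strict inequality for at least one $k$. Summing over all $k$ gives
\[
\textstyle\sum_{k\in L}\sum_{i\in N}v_i(B_i)_k>\sum_{k\in L}\sum_{i\in N}v_i(A_i)_k .
\]
But the total $\sum_k\sum_i v_i(A_i)_k=\sum_j\sum_k v_{i(j)jk}$ decomposes item by item, where $i(j)$ is the agent holding $g_j$. By construction, $\bA$ maximizes each item's term separately, so $\bA$ maximizes the total over all allocations. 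This contradicts the strict inequality, so $\bA$ is PO-USW, and by \Cref{thm:USW-implication} it is also PO-agent.

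The argument has no real obstacle. The one point to check is that PO-agent follows from PO-USW, and \Cref{thm:USW-implication} supplies this. If one prefers a self-contained argument, an agent-wise dominating $\bB$ would also strictly increase the summed total, because its agent-level inequalities, summed over agents, give USW inequalities with at least one strict. The same contradiction then applies directly.
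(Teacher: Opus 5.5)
Your proof is correct and follows the paper's argument: the greedy allocation from \Cref{thm:total-USWs} maximizes the total $\sum_{k\in L}\sum_{i\in N}v_i(A_i)_k$, so no allocation can USW-dominate it (and hence none can agent-dominate it). You spell out the item-wise decomposition and the summation step that the paper leaves implicit.
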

However, most algorithms do not necessarily yield PO allocations.
Consider an instance of two agents, two dimensions, and two items $g_1$, $g_2$. 
The valuations are $v_1(g_1)=(5,1)$, $v_1(g_2)=(2,1)$ and $v_2(g_1)=(4,3)$, $v_2(g_2)=(0,2)$. The round-robin algorithm described in \Cref{thm:a-sUmax1-exist} gives $\bA=(\{g_1\},\{g_2\})$. However, $\bA'=(\{g_2\},\{g_1\})$ Pareto dominates $\bA$ in the sense of USW because the USWs for $\bA$ and $\bA'$ are $(5,3)$ and $(6,4)$, respectively.
Thus, approximation guarantees should not be conflated with Pareto optimality: an allocation may satisfy the target approximation guarantee while still admitting a Pareto improvement under a stronger welfare-vector comparison.

This naturally suggests attempting to apply a Pareto improvement to a given allocation. 
Unfortunately, determining whether any Pareto improvement exists is computationally hard. In fact, deciding whether a proposed allocation is already Pareto optimal (i.e., admits no improvement) is coNP-complete, even in one dimension.
\begin{theorem}[{\cite[Theorem~$1$]{de2009complexity}}]\label{thm:PO-coNP}
Checking whether a given allocation is PO-agent is coNP-complete, even for the single-dimensional case.
\end{theorem}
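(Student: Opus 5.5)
Since this statement is cited from de Keijzer et al.~\cite{de2009complexity}, the plan is to recall their reduction and note why it already settles the single-dimensional case of our model. Membership in coNP is immediate in any number of dimensions. A certificate that $\bA$ is \emph{not} PO-agent is an allocation $\bB$, and we can check conditions (i) and (ii) of the PO-agent definition in $O(nm\ell)$ time by computing every $v_i(B_i)_k$ and $v_i(A_i)_k$. So the work is in coNP-hardness for $\ell=1$. There the definition is exactly classical Pareto optimality for additive valuations over indivisible goods.

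For hardness I would reduce from an NP-complete problem to the \emph{complement}, namely deciding whether $\bA$ admits a Pareto improvement. The natural source is \textsc{3-Partition} or \textsc{Exact Cover by 3-Sets} (X3C), following de Keijzer et al. For X3C with universe $U$ of size $3q$ and family $\mathcal{S}$ of 3-sets, I would proceed as follows.
\begin{itemize}
\item Create one item per element of $U$ and one agent per set $S\in\mathcal{S}$.
\item Add a distinguished agent $0$, and add one private item $h_S$ for each set agent.
\item Agent $S$ values each element-item of $S$ at $1$ and values $h_S$ at $3$. Agent $0$ values every element-item slightly more than the total value it gets from the initial allocation, scaled appropriately.
\item In the initial allocation $\bA$, each set agent holds $h_S$ and agent $0$ holds all the element-items.
\end{itemize}
The valuations must be tuned so that agent $0$ holding all element-items is not trivially improvable. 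This is done by giving agent $0$ a large-value special item that it can trade away, together with unit values chosen so that only an exact cover lets everyone weakly gain with some agent strictly gaining.

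The key steps are the two directions. First, an exact cover yields a Pareto improvement: chosen set agents swap $h_S$ for their three elements, and the freed private items go to agent $0$ in a way that strictly benefits it. Second, any Pareto improvement forces agent $0$ to lose element-items. Those items can only go to set agents containing them, and each such agent must receive exactly its three elements, since giving up $h_S$ costs $3$. Hence the sets of agents that change bundles form an exact cover.

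The main obstacle is the value calibration in the second direction. We must exclude partial trades in which a set agent receives fewer than three elements while keeping $h_S$, and trades that leave agent $0$ exactly indifferent. I would handle this by making every non-private item worthless to set agents outside its sets and by choosing agent $0$'s weights so that its utility is preserved only when it receives exactly $q$ private items in exchange for all $3q$ element-items. Since this is the cited result, in the paper it suffices to invoke \cite[Theorem~1]{de2009complexity} and observe that our $\ell=1$ model coincides with theirs.
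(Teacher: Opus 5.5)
Your closing move (coNP membership with an improving allocation as certificate, then invoking \cite[Theorem~1]{de2009complexity} after noting that for $\ell=1$ PO-agent is exactly classical Pareto optimality under additive valuations) is all the paper does, and that part is fine. The X3C gadget you offer as a reconstruction of their reduction, however, does not work. In your construction a set agent $S$ accepts a new bundle only if it keeps $h_S$ or receives all three element-items of $S$. Those three items are worth exactly $3$, so in the second case it is indifferent. Every improvement is therefore a collection of swaps ``elements of $S$ for $h_S$'' over a \emph{packing} $\mathcal{C}$ of sets, possibly with extra elements handed to agents that keep $h_S$. Your second direction breaks at ``the sets of agents that change bundles form an exact cover'': nothing forces the packing to cover $U$.

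Tuning agent~$0$'s weights cannot fix this. Write $w_S$ and $u(S)$ for agent~$0$'s value of $h_S$ and of the three elements of $S$. By additivity, agent~$0$'s net change over a packing is $\sum_{S\in\mathcal{C}}(w_S-u(S))$ minus the value of any extra elements given away. An exact cover must give agent~$0$ a strict gain, since the set agents are indifferent. Then some single $S$ in the cover has $w_S>u(S)$, and that single swap is already a Pareto improvement whether or not a cover exists. If instead $w_S\le u(S)$ for every $S$, the cover yields no strict gain. So in this gadget family the existence of a Pareto improvement is decided by polynomial-time local checks. No additive calibration can make ``all $3q$ elements for exactly $q$ private items'' the only acceptable trade. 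The unspecified special item cannot help either, since no set agent values it.

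If you want a self-contained proof, a two-agent reduction from \textsc{Partition} in the style of Theorem~\ref{thm:po-usw-coNPhard} suffices. Let agent~1 hold $g_1,\dots,g_m$ and agent~2 hold $h$, with $v_1(g_j)=v_2(g_j)=a_j$, $v_1(h)=T+1/2$, and $v_2(h)=T$. Any improvement must give $h$ to agent~1 and some set $S$ of the $g_j$ to agent~2 with $T\le\sum_{j\in S}a_j\le T+1/2$. That is exactly an even partition, and it leaves agent~1 strictly better off.
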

Thus, even the basic task of certifying the absence of an improving allocation is intractable.
We show similar hardness results for both PO-USW and PO-ESW below.

We first establish hardness for checking PO-USW.
Note that the problem is easy in the single-dimensional case, because determining whether an allocation is PO-USW can be done by comparing it with the Umax value.
However, the problem is coNP-hard even for two dimensions.
\begin{restatable}{theorem}{THMPOUSW}\label{thm:po-usw-coNPhard}
Checking whether a given allocation is PO-USW is coNP-complete, even in the case of two agents and two dimensions.
\end{restatable}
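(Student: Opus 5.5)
Membership in coNP is immediate: a certificate that the allocation $\bA$ is \emph{not} PO-USW is an allocation $\bB$ that dominates it. Checking $\sum_i v_i(B_i)_k\ge\sum_i v_i(A_i)_k$ for both $k$, with at least one strict, takes polynomial time. For hardness, I will reduce from \textsc{Partition} (or \textsc{Subset Sum}) to the complement problem, namely deciding whether a Pareto-improving allocation exists.

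\textbf{Construction.} Let the \textsc{Partition} instance be $a_1,\dots,a_m$ with $\sum_j a_j=2T$. With two agents, an allocation is determined by the set $S$ of items given to agent 1. I choose valuations so that USW in each dimension is an affine function of $\sum_{j\in S}a_j$, with opposite signs in the two dimensions.
\begin{itemize}
\item For each item $g_j$ set $v_1(g_j)=(a_j,0)$ and $v_2(g_j)=(0,a_j)$.
\item Then $\mathrm{USW}_1=\sum_{j\in S}a_j=:s$ and $\mathrm{USW}_2=2T-s$.
\end{itemize}
Every allocation lies on the line $\mathrm{USW}_1+\mathrm{USW}_2=2T$, so with these items alone every allocation is PO-USW. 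I must therefore add a gadget that gives the line a ``bonus'' exactly at $s=T$. The plan is as follows.
\begin{enumerate}
\item Add a special item $g_0$ with $v_1(g_0)=(0,0)$ and $v_2(g_0)=(1,1)$, or a similar gadget, to shift the given allocation. Take as the given allocation $\bA$ one whose USW vector is $(T-\tfrac12,\,T+\tfrac12)$ plus a small offset, which a single dummy item with half-integral values can realize after scaling everything by 2.
\item Make the set of achievable integral vectors consist of points $(s,2T-s)$ plus offsets, arranged so that a point weakly dominating $\bA$'s vector must have $s\ge T$ and $2T-s\ge T$, i.e.\ $s=T$.
\end{enumerate}

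\textbf{Concrete choice.} Scale all $a_j$ by 2 and add a dummy item $d$ with $v_1(d)=(1,1)$ and $v_2(d)=(0,0)$. Give $\bA$ the bundles $A_1=\{d\}$ and $A_2$ equal to all original items. Then the USW of $\bA$ is $(1,\,4T+1)$, which is not the right target, so this choice needs adjustment. The aim is a given allocation whose vector is $(2T-1+\text{bonus},\,2T-1+\text{bonus})$ and which corresponds to a canonical infeasible-looking $S$. For example, use a non-original item $h$ that alone provides value $(2T-1,2T-1)$ when given to agent 2 in a configuration that excludes the original items from counting. This can be achieved by letting agent 2's values on the original items be zero in both dimensions and agent 1's be $(a_j,a_j)$ split via a second gadget. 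I expect to iterate on this.

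\textbf{Main obstacle.} The crux is ensuring that the given allocation $\bA$ itself is not trivially dominated by some non-partition allocation, while any dominating allocation encodes a perfect partition. I will first try making $\bA$ give each dimension exactly $2T-1$ via one heavy gadget item worth $(2T-1,2T-1)$ to agent 2 and 0 to agent 1. Alternatives that take the gadget away from agent 2 would then lose $2T-1$ in both dimensions, and all original items give only $(2a_j,0)$ to agent 1 and $(0,2a_j)$ to agent 2. With the gadget kept by agent 2, an allocation yields $(2T-1+2s,\,2T-1+2(2T-s))$, so domination is too easy and needs recalibration. I would fix this by letting the gadget's value depend on which agent gets it, so that $\bA$ sits strictly inside the frontier only at the midpoint. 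Checking that both directions of the equivalence (partition exists $\iff$ $\bA$ is not PO-USW) hold is the step I expect to take the most care.
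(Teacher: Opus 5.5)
\textbf{There is a genuine gap: the reduction is never actually specified.} Your coNP membership argument is right, and so are your base items $v_1(g_j)=(a_j,0)$, $v_2(g_j)=(0,a_j)$. But every concrete gadget you write down fails, some as you notice yourself. The final fix (``letting the gadget's value depend on which agent gets it'') is only described in words, and you defer both directions of the equivalence.

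Take your first gadget, $v_1(g_0)=(0,0)$ and $v_2(g_0)=(1,1)$. It contributes either $(0,0)$ or $(1,1)$ to the USW vector, and one contribution dominates the other. So every allocation giving $g_0$ to agent~1 is strictly dominated by moving $g_0$ to agent~2. Every allocation giving $g_0$ to agent~2 lies on the line $\mathrm{USW}_1+\mathrm{USW}_2=2T+2$ and is PO-USW. The answer therefore never depends on the \textsc{Partition} instance.

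Your attempts also run into a structural obstacle. $\bA$ must assign \emph{all} items and must be written down without knowing a partition. So the original items can neither be ``excluded from counting'' nor split evenly in $\bA$. The natural choice is to give them all to one agent, which puts $\bA$ at an endpoint of its line. It must then be the gadget, not the original items, that moves the comparison point into the middle of the other branch.

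\textbf{The missing idea.} Use a gadget whose two possible USW contributions are \emph{incomparable}, creating two parallel branches. Place $\bA$ at the endpoint of the lower branch, and let the gadget shift the upper branch so that the window of dominating points has width below $1$ around $s=T$.

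The paper does this with one extra item $g_{m+1}$, where $v_1(g_{m+1})=(T,0)$ and $v_2(g_{m+1})=(0,T-\tfrac12)$. The given allocation is $\bA=(\{g_1,\dots,g_m\},\{g_{m+1}\})$, with USW vector $(2T,T-\tfrac12)$. Write $s$ for the $a$-sum of the original items held by agent~1.
\begin{itemize}
\item If agent~2 holds $g_{m+1}$, the USW vector is $(s,3T-\tfrac12-s)$. Weak domination forces $s=2T$, which gives equality, so nothing in $\bA$'s own branch strictly dominates it.
\item If agent~1 holds $g_{m+1}$, the USW vector is $(s+T,2T-s)$. It weakly dominates $(2T,T-\tfrac12)$ iff $T\le s\le T+\tfrac12$, i.e., $s=T$ by integrality. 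In that case the second coordinate is strictly larger.
\end{itemize}
Hence $\bA$ is not PO-USW iff a perfect partition exists; scale by $2$ if you want integer values. Without such a construction and this two-case check, your proposal does not yet prove hardness.
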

The hardness proof reduces from \textsc{Partition}. The constructed allocation fails to be PO-USW exactly when there is an improving allocation whose two-dimensional USW vector corresponds to a balanced partition.

Next, we turn to PO-ESW. In the single-dimensional case, computing a PO-ESW allocation is equivalent to computing an Emax allocation, and is therefore NP-hard. We additionally show that checking whether a given allocation is PO-ESW is already coNP-hard with two agents and two dimensions.
\begin{restatable}{theorem}{THMPOESW}\label{thm:po-esw-coNPhard}
Computing a PO-ESW allocation is NP-hard, even in the single-dimensional two-agent case.
Additionally, checking whether a given allocation is PO-ESW is coNP-complete, even in the case of two agents and two dimensions.
\end{restatable}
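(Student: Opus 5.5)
The statement has two parts, and I would handle them separately.

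\emph{First part: computing a PO-ESW allocation is NP-hard with one dimension and two agents.} With one dimension, PO-ESW means no allocation $\bB$ has $\min_i v_i(B_i)_1>\min_i v_i(A_i)_1$. Hence an allocation is PO-ESW if and only if it attains $\Emax_1$. So any algorithm that outputs a PO-ESW allocation also computes an Emax allocation. The latter is NP-hard for two agents with ternary valuations~\cite{fitzsimmons2024hardness}, and that finishes this part.

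\emph{Second part: checking PO-ESW is coNP-complete with two agents and two dimensions.} Membership in coNP is immediate: a dominating allocation $\bB$ is a certificate that can be verified in polynomial time. For hardness, I would reduce from \textsc{Partition}, as in \Cref{thm:po-usw-coNPhard}. Take positive integers $a_1,\dots,a_t$ with $\sum_j a_j=2S$. Create items $g_1,\dots,g_t$ with identical valuations for both agents:
\[
v_i(g_j)=(a_j,a_j) \quad\text{for } i\in\{1,2\}.
\]
In both dimensions the ESW of an allocation is the smaller of the two bundle sums. This is at most $S$, with equality exactly when the items split into a balanced partition.

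The given allocation $\bA$ must have ESW vector strictly below $(S,S)$, while every alternative other than a perfect partition must fail to dominate it. Using only the identical items, any $\bB$ has ESW vector $(x,x)$ with $x\le S$. So I would make the two dimensions asymmetric with auxiliary items. Add an item $h_1$ with value $(S,0)$ for agent~1 and $(0,0)$ for agent~2, and an item $h_2$ with value $(0,0)$ for agent~1 and $(0,S)$ for agent~2. The original items keep value $(a_j,a_j)$ for both agents. Let $\bA$ give $h_1$ and all $g_j$ to agent~1, and $h_2$ to agent~2. Then
\[
v_1(A_1)=(3S,2S),\qquad v_2(A_2)=(0,S),
\]
so the ESW vector of $\bA$ is $(0,S)$. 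A dominating $\bB$ needs dimension-2 ESW at least $S$ and positive dimension-1 ESW.

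\emph{Main obstacle.} The crux is tuning the gadget so that any dominating $\bB$ forces an exact partition of the $a_j$. I expect a few rounds of adjusting these auxiliary values before the case analysis closes. For the choice above:
\begin{itemize}
\item If $h_2$ does not go to agent~2, then agent~2 needs $\sum_{j\in B_2}a_j\ge S$ in dimension~2 and positive value in dimension~1.
\item Agent~1 needs at least $S$ in dimension~2 from the $g_j$ alone, since $h_1$ contributes $0$ to dimension~2.
\item Together these force both sums to equal $S$, which is a partition.
\end{itemize}
The remaining cases are where $h_2$ does go to agent~2, or where $h_1$ does not go to agent~1.
\begin{itemize}
\item If $h_2\in B_2$, agent~2's dimension-2 constraint is already met. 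Dimension~1 then only needs both agents positive, which gives domination trivially, for example agent~2 getting $h_2$ and one $g_j$.
\item So the gadget must penalize this. I would make $h_2$'s dimension-2 value smaller, say $S-\tfrac12$ after scaling all $a_j$ by~$2$, and set the $\bA$ ESW in dimension~2 to exactly that. I would also raise the dimension-1 requirement by giving $h_1$ a dimension-1 value to agent~2 as well, so that dimension-1 improvement requires agent~2 to gain substantial $g$-value.
\end{itemize}
With the thresholds scaled this way, strict improvement in one dimension plus weak improvement in the other should be achievable exactly when a subset sums to $S$. I would verify this by exhaustively casing on the owners of $h_1$ and $h_2$.
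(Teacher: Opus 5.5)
Your first part and the coNP membership argument are correct and match the paper. The gap is in the hardness reduction for the second part: you never complete it. The gadget you write down does not work. Assume w.l.o.g.\ $a_j\le S$ for all $j$, and let $\bB$ give agent~2 the bundle $B_2=\{h_2,g_1\}$ and agent~1 everything else. Then $\bB$ has ESW $a_1>0$ in dimension~1 and $\min\{2S-a_1,S+a_1\}\ge S$ in dimension~2. So $\bB$ dominates $(0,S)$ whether or not a partition exists, and $\bA$ is never PO-ESW. Your case list also overlooks that a strict gain in dimension~2 alone already suffices for domination.

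The repair you sketch (rescale, lower $h_2$ to a half-integral value, give $h_1$ a value for agent~2) is a list of intentions, with no concrete valuations and no case analysis. As described, it still fails. In $\bA$ agent~2 holds only $h_2$, which is worth $0$ to it in dimension~1, so the dimension-1 ESW of $\bA$ stays $0$. Handing agent~2 one small $g_j$ besides $h_2$ therefore still dominates, independently of the \textsc{Partition} instance. Giving $h_1$ a value for agent~2 does not help, since agent~2 does not hold $h_1$ in $\bA$. What is missing is a polynomially constructible allocation whose ESW sits just below the partition threshold in \emph{every} dimension, so that any strict improvement forces an exact split.

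The paper achieves this without any asymmetry between dimensions. It makes the two dimensions identical and puts the asymmetry between the agents instead. Items $g_1,\dots,g_m$ are worth $a_j$ to agent~1 and $2a_j$ to agent~2. One extra item $g_{m+1}$ is worth $T$ to agent~1 and $2T-\tfrac12$ to agent~2, where $\sum_j a_j=2T$. The allocation $(\{g_1,\dots,g_m\},\{g_{m+1}\})$ has ESW $2T-\tfrac12$ in both dimensions. Because the dimensions coincide, a domination must make both agents strictly exceed $2T-\tfrac12$.

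Let $I$ be the set of indices $j\le m$ given to agent~2. If agent~2 kept $g_{m+1}$, agent~1's integral value $2T-\sum_{j\in I}a_j$ would force $I=\emptyset$, leaving agent~2 unchanged. So agent~1 must get $g_{m+1}$, and integrality then gives $\sum_{j\notin I}a_j\ge T$ and $\sum_{j\in I}a_j\ge T$, i.e., an exact partition. Conversely, a partition yields ESW $2T$ in both dimensions.

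Your observation that identical items give ESW vectors $(x,x)$ with $x\le S$ was the right starting point. The half-integral auxiliary item, with agent-asymmetric values, is the device that supplies the threshold allocation.
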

The computing hardness follows from the known NP-hardness of one-dimensional Emax allocation. The verification hardness is shown by a reduction from \textsc{Partition}, where a Pareto improvement in the ESW vector exists exactly when the input can be partitioned evenly.

For binary valuations, the single-dimensional case remains tractable because PO-ESW coincides with Emax.
This tractability disappears already with two dimensions: computing a PO-ESW allocation is NP-hard, and checking whether a given allocation is PO-ESW is coNP-hard, as established by \Cref{thm:sEmax-hard}.
\begin{theorem}\label{thm:binary-po-esw-coNPhard}
Computing a PO-ESW allocation is NP-hard, even in the case of two dimensions with binary valuations.
Additionally, checking whether a given allocation is PO-ESW is coNP-complete, even in the case of two dimensions with binary valuations.
\end{theorem}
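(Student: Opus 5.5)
We prove \Cref{thm:binary-po-esw-coNPhard} by reusing the binary two-dimensional 3DM construction from \Cref{thm:sEmax-hard}, together with the facts stated there: $\Emax_1=\Emax_2=1$, and an allocation gives every agent value at least one in both dimensions exactly when the 3DM instance has a perfect matching. The key structural observation is that in such an instance the ESW vector of any allocation lies in $\{0,1\}^2$, and $(1,1)$ is reachable if and only if an sEmax allocation exists. The ESW in each dimension cannot exceed $\Emax_k=1$, so the only possible vectors are $(0,0),(1,0),(0,1),(1,1)$.

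\textbf{Checking.} We first place the problem in coNP: a Pareto improvement $\bB$ is a polynomial-size certificate, and we verify it by computing the ESW vectors of $\bA$ and $\bB$. For hardness, we give the verifier a fixed allocation $\bA$ whose ESW vector is $(1,0)$, or $(0,0)$ if necessary. Then $\bA$ is not PO-ESW if and only if some allocation strictly improves a coordinate without lowering the other.

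With $(0,0)$, any allocation achieving value one in a single dimension would count as an improvement. That is too easy, since $\Emax_k=1$ is attainable in each dimension separately, so $(0,0)$ is the wrong choice. We therefore choose $\bA$ to be an allocation attaining ESW $1$ in dimension~1, computed in polynomial time (for instance via a one-dimensional binary matching argument, or directly from the structure of the construction), with ESW $0$ in dimension~2. The only possible improvement over $(1,0)$ is then $(1,1)$, which exists exactly when the 3DM instance has a perfect matching. So $\bA$ is PO-ESW if and only if there is no perfect matching, giving coNP-hardness.

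\textbf{Main obstacle.} The hard step is producing $\bA$ with vector exactly $(1,0)$ in polynomial time. If some allocation with $(1,0)$ is easy to write down from the gadget, for example by giving agents their dimension-1 items and the auxiliary items, we use it directly. Otherwise we modify the construction slightly: we add one fresh dimension-1-only dummy item per agent and one extra agent-independent gadget, which preserves the equivalence while making a $(1,0)$ allocation explicit.

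\textbf{Computing.} If we could compute a PO-ESW allocation in polynomial time, we would evaluate its ESW vector. When a perfect matching exists, $(1,1)$ Pareto-dominates every other vector in $\{0,1\}^2$, so every PO-ESW allocation must achieve $(1,1)$. When no perfect matching exists, no allocation achieves $(1,1)$. Checking whether the returned allocation has vector $(1,1)$ therefore decides 3DM, which gives NP-hardness.
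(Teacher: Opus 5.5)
Your approach is the paper's: it proves this theorem in one line by reusing the binary two-dimensional 3DM construction of \Cref{thm:sEmax-hard}. Your filled-in argument is right: ESW vectors lie in $\{0,1\}^2$, and $(1,1)$ is reachable iff a perfect matching exists. Your coNP membership argument and your NP-hardness argument for computing a PO-ESW allocation are both fine. The only loose end is the step you call the main obstacle. It is not a real obstacle, but you should close it explicitly instead of hedging.

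By the preprocessing in that proof, the $X$--$Y$ matching $T^{(Y)}$ exists and is found by bipartite matching. Build the allocation as follows:
\begin{itemize}
\item give each agent of $T^{(Y)}$ its $y$-item $g_b$;
\item for each $x_a$, give the remaining $s_a-1$ agents whose triples contain $x_a$ the $s_a-1$ auxiliary items of $x_a$;
\item give all $z$-items $g_{n+1},\dots,g_{2n}$ to a single agent of $T^{(Y)}$.
\end{itemize}
Every agent then holds exactly one dimension-1-valuable item, so the dimension-1 ESW is $1$. For $n\ge 2$ (the case $n=1$ is decided trivially), any other agent of $T^{(Y)}$ holds only its $y$-item and has dimension-2 value $0$. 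So the ESW vector is exactly $(1,0)$, and $\bA$ is PO-ESW iff there is no perfect matching.

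An arbitrary dimension-1-optimal allocation from a matching computation does not by itself guarantee dimension-2 ESW $0$. You have to place the $z$-items deliberately as above. Alternatively, note that an accidental $(1,1)$ exhibits a perfect matching, and in that case output a fixed non-PO-ESW instance.

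Drop your fallback modification. As sketched, adding a dimension-1-only dummy item per agent raises $\Emax_1$ to $2$: each agent can take its dummy plus one original dimension-1-valuable item. An allocation with ESW vector $(2,0)$ is then easy to produce and dominates your $(1,0)$ allocation regardless of the 3DM instance, so the equivalence breaks.
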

The reduction uses the same binary two-dimensional 3DM construction as \Cref{thm:sEmax-hard}.

\section{Conclusion}
In this paper, we introduced the MDEA model for allocating indivisible items with multidimensional valuations and characterized the worst-case frontier of simultaneous efficiency. We showed that exact simultaneous efficiency is severely limited, that approximate simultaneous efficiency admits sharp dimension-dependent thresholds, and that natural multidimensional Pareto notions have distinct structural and computational properties. In particular, the order $1/\ell$ for guaranteeing simultaneous efficiency in every dimension is essentially best possible in the general model.

These results should not be viewed as indicating a weakness of the model, but rather as showing that simultaneous efficiency across many dimensions is intrinsically constrained without additional structure. In this sense, MDEA serves as a baseline framework: once the worst-case frontier is identified, one can ask which extra assumptions permit stronger guarantees.

An important modeling issue concerns the choice of the dimensions themselves. We treat the set of dimensions as given exogenously, but in practice the value of $\ell$ and the interpretation of each dimension depend on how one represents the relevant criteria. Since our guarantees deteriorate with $\ell$, deciding which attributes should be kept separate, aggregated, or compressed is itself a central design question. In particular, if a decision maker can identify a smaller set of decision-critical dimensions, our guarantees apply with respect to that reduced value of $\ell$, giving stronger worst-case bounds for the selected representation. More generally, one may reduce many raw attributes to a smaller set of latent dimensions using PCA, factor analysis, or domain-driven grouping before applying our framework. Our guarantees should then be understood relative to the chosen representation.

Other directions include richer objective functions, indivisible chores, and empirical studies. \Cref{thm:sUmax} and \Cref{thm:Umax-ell-approx} extend to chores and even to mixed goods and chores, but for other notions the achievable forms of multidimensional efficiency remain open.

\bibliography{refs}

\appendix
\section{Our Results}\label{app:results}

Table~\ref{tab:results} summarizes our results.
When a bound is best possible, this is stated directly in the result description rather than encoded by a separate checkmark.

\begin{table}[H]
\centering
\caption{Summary of simultaneous efficiency results}
\label{tab:results}
\small
\setlength{\tabcolsep}{3pt}
\begin{tabular}{@{}>{\raggedright\arraybackslash}p{0.20\textwidth}>{\raggedright\arraybackslash}p{0.61\textwidth}>{\raggedright\arraybackslash}p{0.14\textwidth}@{}}
\toprule
\textbf{Setting} & \textbf{Main result} & \textbf{Ref.} \\
\midrule
Exact sUmax & Need not exist, even with two agents and two dimensions. & Ex.~\ref{ex:umax1-imp} \\
Exact sEmax & Need not exist, even with two agents and two dimensions. & Ex.~\ref{ex:emax1-imp} \\
\midrule
Max Umax dims & $c/\ell$-approximation for every fixed $c$; no $1/\ell^{1-\epsilon}$-approximation unless P${}={}$NP, so the dependence on $\ell$ is tight up to $\epsilon$. & Obs.~\ref{thm:Umax-ell-approx}, Thm.~\ref{thm:sUmax-Max-Intersect} \\
Max Emax dims & NP-hard already with binary valuations and two dimensions. & Thm.~\ref{thm:sEmax-hard} \\
\midrule
$\alpha$-sUmax$\,1$ & Always exists for $\alpha=1/\ell$, and may fail for every $\alpha>1/\ell$; this gives a tight threshold. & Thms.~\ref{thm:a-sUmax1-exist}, \ref{thm:a-sUmax1-nonexist} \\
\midrule
$\alpha$-sEmax$\,c$ & Always exists for $\alpha=1/\ell$ when $c=\ell$, while every fixed $c$ may fail for $\alpha>1/\ell$; this gives a tight threshold in $\ell$. & Thms.~\ref{thm:a-sEmaxl-exist}, \ref{thm:a-sEmax1-nonexist} \\
\midrule
Aggregate USW & Maximizing the sum is polynomial-time solvable, but maximizing the minimum is NP-hard to approximate. & Thms.~\ref{thm:total-USWs}, \ref{thm:sUmax-ge1} \\
Aggregate ESW & Maximizing the minimum ESW is NP-hard to approximate, even with binary valuations and two dimensions. & Thm.~\ref{thm:sEmax-g1} \\
\midrule
Pareto notions & PO-agent and PO-USW allocations are computable, but verifying PO-USW and PO-ESW is coNP-complete. & Obs.~\ref{obs:compute-PO}, Thms.~\ref{thm:po-usw-coNPhard}, \ref{thm:po-esw-coNPhard} \\
\bottomrule
\end{tabular}
\end{table}

\section{Omitted Proofs}\label{app:proofs}
\THMSUMAXIN*
\begin{proof}
We provide a gap-preserving reduction from the \emph{MAX-Intersect} problem. 
In the problem, we are given a universe $U=\{e_1,e_2,\dots,e_\ell\}$, 
and $m$ sets consisting of two sets $S_1,S_2,\dots,S_m$ where $S_j=\{S_{j,1},S_{j,2}\}$, the goal is to select $\pi\colon[m]\to[2]$ that maximizes $|\bigcap_{j\in[m]} S_{j,\pi(j)}|$.
By a gap-preserving reduction from the Max-Clique problem, Clifford and Popa~\cite{clifford2011maximum} proved that this problem cannot be approximated within a multiplicative factor of $1/\ell^{1-\epsilon}$, for any $\epsilon>0$, unless P${}={}$NP.

We use MAX-Intersect instances after the standard preprocessing that deletes every element $e_k$ for which $e_k\notin S_{j,1}\cup S_{j,2}$ for some $j\in[m]$.
Such an element can never belong to any feasible intersection, and hence does not affect the objective value.
We denote the size of the remaining universe by $\ell$; the hardness statement above is with respect to this reduced universe size.
From a given Max-Intersect instance, we construct an MDEA instance with two agents $N=\{1,2\}$, items $M=\{g_1,g_2,\dots,g_m\}$, and dimensions $L=[\ell]$. 
For $i\in N$, $g_j\in M$, and $k\in L$, the valuation is defined as 
\begin{align}
v_{ijk}&=\begin{cases}
1&\text{if } e_k\in S_{j,i},\\
0&\text{otherwise}.
\end{cases}
\end{align}
Note that, for each dimension $k\in L$, the Umax value is $\Umax_k=m$.

We now show that the optimum value for the Max-Intersect instance is equal to the optimum value of the maximum Umax dimensions problem for the constructed MDEA instance.
From a mapping $\pi\colon[m]\to[2]$, we construct the allocation $\bA=(A_1,A_2)$ such that $A_i=\{g_j\in M\mid \pi(j)=i\}$ for $i=1,2$.
Then, the USW value $\sum_{i\in N}v_i(A_i)_k$ attains $\Umax_k~(=m)$ if and only if $e_k\in S_{j,\pi(j)}$ for all $j\in[m]$. This means that the number of Umax dimensions for $\bA$ is equal to $|\bigcap_{j\in[m]}S_{j,\pi(j)}|$.
Conversely, given an allocation $\bA'=(A'_1,A'_2)$, we can construct a mapping $\pi'\colon [m]\to[2]$ as $\pi'(j)=i$ if $g_j\in A'_i$ for $i=1,2$. The number of Umax dimensions for $\bA'$ is then $|\bigcap_{j\in[m]}S_{j,\pi'(j)}|$.
Therefore, this is a gap-preserving reduction.
\end{proof}
\THMUMAXFRAC*
\begin{proof}

Let $\cA$ denote the set of all possible allocations of the $m$ items to the $n$ agents. Since each item can be assigned to any of the $n$ agents, the total number of distinct allocations is $|\cA| = n^m$.
Define $\cA_k\subseteq \cA$ to be the set of allocations $\bA$ such that $\sum_{i\in N}v_i(A_i)_k=\Umax_k$ for each dimension $k\in L$.

Note that $\cA_k$ is not empty for each $k\in L$ because $\Umax_k$ is achievable by definition.
Thus, $\sum_{k=1}^\ell|\cA_k|$ is at least $\ell$.
On the other hand, for each allocation $\bA \in \cA$, let $f(\bA)=|\{k\in[\ell]\mid \bA\in\cA_k\}|$ be the number of dimensions in which $\bA$ achieves Umax. 
Then, we have
\begin{align}
\sum_{\bA \in \cA} f(\bA) = \sum_{k=1}^\ell |\cA_k| 
\geq \ell.
\end{align}
Thus, by the pigeonhole principle, we obtain
\[
\textstyle
\max\limits_{\bA\in\cA}\Big|\big\{k\in L\mid \sum\limits_{i\in N}v_i(A_i)_k=\Umax_k\big\}\Big|
=\max\limits_{\bA\in\cA}f(\bA)
\geq \left\lceil \frac{\ell}{n^m} \right\rceil.
\]
This proves the lower bound.

Next, we provide the upper bound.
Choose a mapping $\rho\colon [\ell]\to\cA$ such that $|\{k\in[\ell]\mid \rho(k)=\bA\}|\le \lceil\ell/n^m\rceil$ for every $\bA\in\cA$. 
Such a mapping exists by distributing the $\ell$ dimensions as evenly as possible among the $|\cA|=n^m$ allocations.
For each agent $i\in[n]$, item index $j\in[m]$, and dimension $k\in[\ell]$,
we define the valuation as
\begin{align}
v_{ijk}=\begin{cases}
1 & \text{if agent $i$ receives $g_j$ in allocation $\rho(k)$},\\
0 & \text{otherwise}.
\end{cases}
\end{align}
This instance is binary by construction.

For this instance, the Umax value in dimension $k\in[\ell]$ is $m$, and it can only be achieved by using the allocation $\rho(k)$.
Hence, for each allocation $\bA\in\cA$, the number of dimensions in which $\bA$ achieves Umax is $|\{k\in[\ell]\mid \rho(k)=\bA\}|\le\lceil \ell/n^m\rceil$.
Thus, $\max_{\bA}|\{k\in L\mid \sum_{i\in N}v_i(A_i)_k=\Umax_k\}|\le\lceil \ell/n^m\rceil$ for this instance. 
\end{proof}
\THMSEAMHARD*
\begin{proof}
It suffices to prove hardness for the existence of an sEmax allocation, since the constructed instances have two dimensions and hence attaining Emax in both dimensions is equivalent to maximizing the number of Emax dimensions to two.
We present a polynomial-time reduction from the 3-dimensional matching (3DM) problem, which is known to be NP-hard~\cite{garey1979computers}.
In the 3DM problem, we are given three sets of elements, $X=\{x_1,\dots,x_n\}$, $Y=\{y_1,\dots,y_n\}$, and $Z=\{z_1,\dots,z_n\}$. We are also given a set of hyperedges $T=\{t_1,\dots,t_m\}$ where each $t\in T$ is an ordered triple in $X\times Y\times Z$. The goal of the problem is to determine if there exists a subset $T'$ of $T$ such that each element from $X$, $Y$, and $Z$ appears exactly once in $T'$.
In other words, our task is to find a perfect matching that covers all the elements in $X$, $Y$, and $Z$ without any repetitions.
Without loss of generality, we assume that there exists a subset $T^{(Y)} \subseteq T$ such that each element from $X$ and $Y$ appears exactly once in $T^{(Y)}$, since otherwise the instance is a no-instance, and this can be easily verified in polynomial time.  
Similarly, we assume that there exists a subset $T^{(Z)} \subseteq T$ such that each element from $X$ and $Z$ appears exactly once in $T^{(Z)}$.

For each $j\in [n]$, let $s_j=|\{t_i\in T\mid x_j\in t_i\}|$ be the number of hyperedges $t_i$ that contains $x_j$.
Note that $\sum_{j=1}^n s_j=m$.
From a given 3DM instance, we construct an MDEA instance with agents $N=[m]$, $M=\{g_1,g_2,\dots,g_{n+m}\}$, and $L=\{1,2\}$.
For $i\in N$, $g_j\in M$, $k\in L$, with $t_i=(x_a,y_b,z_c)$,
the valuation $v_{ijk}$ equals $1$ if $(j,k)=(b,1)$, $(n+c,2)$, or $2n+\sum_{p=1}^{a-1}(s_p-1)< j\le 2n+\sum_{p=1}^{a}(s_p-1)$, and $0$ otherwise.
An example of this reduction is illustrated in \Cref{tab:sEmax-hard}.

\begin{table}[htbp]
\centering
\caption{The valuations $v_{ijk}$ for the reduced instance in \Cref{thm:sEmax-hard} for the instance: $n=3$, $m=5$, $t_1=(x_1,y_2,z_2)$, $t_2=(x_1,y_2,z_3)$, $t_3=(x_2,y_3,z_3)$, $t_4=(x_3,y_1,z_1)$, $t_5=(x_3,y_3,z_3)$. The red allocation corresponds to the solution $\{t_1,t_3,t_4\}$.}\label{tab:sEmax-hard}
\begin{tabular}{cc||c|c|c||c|c|c||c|c}
\toprule
agent & dim.
& $g_1$
& $g_2$
& $g_3$ 
& $g_4$ 
& $g_5$ 
& $g_6$ 
& $g_7$ 
& $g_8$ \\\hline
\multirow{2}{*}{1} & 1 & 0 & \alert{1} & 0 & 0 & \alert{0} & 0 & 1 & 0\\\cline{2-10}
                   & 2 & 0 & \alert{0} & 0 & 0 & \alert{1} & 0 & 1 & 0\\\hline
\multirow{2}{*}{2} & 1 & 0 & 1 & 0 & 0 & 0 & 0 & \alert{1} & 0\\\cline{2-10}
                   & 2 & 0 & 0 & 0 & 0 & 0 & 1 & \alert{1} & 0\\\hline
\multirow{2}{*}{3} & 1 & 0 & 0 & \alert{1} & 0 & 0 & \alert{0} & 0 & 0\\\cline{2-10}
                   & 2 & 0 & 0 & \alert{0} & 0 & 0 & \alert{1} & 0 & 0\\\hline
\multirow{2}{*}{4} & 1 & \alert{1} & 0 & 0 & \alert{0} & 0 & 0 & 0 & 1\\\cline{2-10}
                   & 2 & \alert{0} & 0 & 0 & \alert{1} & 0 & 0 & 0 & 1\\\hline
\multirow{2}{*}{5} & 1 & 0 & 0 & 1 & 0 & 0 & 0 & 0 & \alert{1}\\\cline{2-10}
                   & 2 & 0 & 0 & 0 & 0 & 0 & 1 & 0 & \alert{1}\\
\bottomrule
\end{tabular}
\end{table}

We remark that, for each dimension, the Emax value is $1$, and each agent must receive exactly one valuable item.
Indeed, for each $i \in N$ with $t_i = (x_a, y_b, z_c) \in T$,  
allocating $g_b$ to $i$ if $t_i \in T^{(Y)}$, and allocating one item from  
$\big\{g_{2n+\sum_{p=1}^{a-1}(s_p-1)+1}, \dots, g_{2n+\sum_{p=1}^{a}(s_p-1)}\big\}$ if $t_i \notin T^{(Y)}$,  
guarantees a utility of $1$ for agent $i$ in the first dimension.
Moreover, the Emax value is at most $1$ for the first dimension because there are $m$ items that are valuable to at least one agent in the first dimension (i.e., $\{g_1,\dots,g_n,g_{2n+1},\dots,g_{n+m}\}$), and each agent can receive at most one valuable item. 
An analogous argument applies to the second dimension.

We now prove that the MDEA instance admits a sEmax allocation if and only if the 3DM instance is a yes-instance.
Suppose that an allocation $\bA$ achieves sEmax, i.e., $v_i(A_i)_k=1$ for all $i\in N$ and $k\in L$.
Then, for each $i\in N$ with $t_i = (x_a, y_b, z_c) \in T$, 
we have $A_i=\{g_b,g_{n+c}\}$ or $A_i=\{g_j\}$ with $2n+\sum_{p=1}^{a-1}(s_p-1)< j\le 2n+\sum_{p=1}^{a}(s_p-1)$.
Thus, $T'\coloneqq\{i\in N\mid |A_i|=2\}$ forms a 3-dimensional matching, and the 3DM instance is a yes-instance.
Conversely, suppose that the 3DM instance is a yes-instance, and let $T'$ be a 3-dimensional matching.
Then, for each $i \in N$ with $t_i = (x_a, y_b, z_c) \in T$,  
allocating $\{g_b,g_{n+c}\}$ to $i$ if $t_i \in T'$, and allocating one item from  
$\{g_{2n+\sum_{p=1}^{a-1}(s_p-1)+1}, \dots, g_{2n+\sum_{p=1}^{a}(s_p-1)}\}$ if $t_i \notin T'$,  
guarantees a utility of $1$ for agent $i$ in both dimensions.
Thus, this allocation achieves the ESW value of 1 for both dimensions.
Therefore, the problems of checking the existence of an sEmax allocation and of determining the maximum number of Emax dimensions are NP-hard, even when there are two dimensions and the valuations are binary.
\end{proof}

\TMSEMAXHARD*
\begin{proof}
We present a polynomial-time reduction from the PARTITION problem, which is known to be NP-complete~\cite{garey1979computers}.
In the PARTITION problem, we are given $m$ positive integers $a_1,a_2,\cdots,a_m$ such that $\sum_{j=1}^{m}a_j=2T$. The goal is to determine whether there exists a partition $(S_1,S_2)$ of the set $[m]$ such that $\sum_{j\in S_1}a_j=\sum_{j\in S_2}a_j=T$. 

From a given instance of the PARTITION problem, we construct an MDEA instance with agents $N=\{1,2\}$, items $M=\{g_1,g_2,\dots,g_m,g_{m+1},g_{m+2}\}$, and dimensions $L=\{1,2\}$.
For $i\in N$, $g_j\in M$, $k\in L$, the valuation is defined as
\begin{align}
v_{ijk}=\begin{cases}
a_j & \text{if $i=k$ and $j\le m$},\\
T   & \text{if $(i,j,k)=(1,m+1,2)$ or $(2,m+2,1)$},\\
0   & \text{otherwise}.
\end{cases}
\end{align}
The valuations are illustrated in \Cref{tab:sEmax-hard2}.

\begin{table}[htbp]
\centering
\caption{The valuations $v_{ijk}$ for the reduced instance in \Cref{thm:sEmax-hard2}. Each column block corresponds to an item, the blue numbers denote the agent index, and the two rows represent the valuations in dimensions $1$ and $2$, respectively.}\label{tab:sEmax-hard2}
\tabcolsep = 1.2mm
\begin{tabular}{c||c|c|c|c|c|c|c|c|c|c|c|c|c|c|}
\toprule
& \multicolumn{2}{c|}{$g_1$} 
& \multicolumn{2}{c|}{$g_2$} 
& $\cdots$
& \multicolumn{2}{c|}{$g_j$} 
& $\cdots$
& \multicolumn{2}{c|}{$g_m$} 
& \multicolumn{2}{c|}{$g_{m+1}$}
& \multicolumn{2}{c|}{$g_{m+2}$} \\\hline
& $\textcolor{blue}{1}$ & $\textcolor{blue}{2}$ &  $\textcolor{blue}{1}$ & $\textcolor{blue}{2}$ & $\cdots$ & $\textcolor{blue}{1}$ & $\textcolor{blue}{2}$ & $\cdots$ &  $\textcolor{blue}{1}$ & $\textcolor{blue}{2}$ & $\textcolor{blue}{1}$ & $\textcolor{blue}{2}$ & $\textcolor{blue}{1}$ & $\textcolor{blue}{2}$
\\\hline
$1$ & $a_1$ & $0$ & $a_2$ & $0$ & $\cdots$ & $a_j$ & $0$ & $\cdots$ & $a_m$ & $0$ & $0$ & $0$ & $0$ & $T$  \\
$2$ & $0$ & $a_1$ & $0$ & $a_2$ & $\cdots$ & $0$ & $a_j$ & $\cdots$ & $0$ & $a_m$ & $T$& $0$ & $0$ & $0$ \\
\bottomrule
\end{tabular}
\end{table}

Without loss of generality, we may assume that $g_{m+1}$ is allocated to agent $1$ and $g_{m+2}$ is allocated to agent $2$. The Emax values are $\Emax_1=\Emax_2=T$.
If the PARTITION instance is a yes-instance, then there exists a partition $(S_1,S_2)$ of $[m]$ such that $\sum_{j\in S_1}a_j=\sum_{j\in S_2}a_j=T$.
Allocating $\{g_j\mid j\in S_1\}\cup\{g_{m+1}\}$ to agent $1$ and $\{g_j\mid j\in S_2\}\cup\{g_{m+2}\}$ to agent $2$ achieves Emax in both dimensions.

Conversely, suppose that the MDEA instance admits a sEmax allocation $\bA$.
To achieve $\Emax_1=T$, we must have $v_1(A_1)_1=\sum_{g_j\in A_1\cap\{g_1,\dots,g_m\}}a_j\ge T$, since $g_{m+1}$ contributes nothing in dimension $1$ and $g_{m+2}$ is allocated to agent $2$.
Similarly, achieving $\Emax_2=T$ requires $v_2(A_2)_2=\sum_{g_j\in A_2\cap\{g_1,\dots,g_m\}}a_j\ge T$.
Because $\sum_{j=1}^m a_j=2T$ and the items $\{g_1,\dots,g_m\}$ are partitioned between the two agents, both sums must in fact be exactly $T$.
Hence, the sets $S_1=A_1\cap\{g_1,\dots,g_m\}$ and $S_2=A_2\cap\{g_1,\dots,g_m\}$ form a solution to the PARTITION instance.

Since there are only two dimensions, determining whether the maximum number of Emax dimensions is two is equivalent to checking the existence of an sEmax allocation.
Therefore, the problems of checking the existence of an sEmax allocation and of determining the maximum number of Emax dimensions are NP-hard.
\end{proof}
\THMEMAXFRAC*
\begin{proof}
Let $\cA'$ denote the set of all possible allocations of the $m$ items to the $n$ agents, where each agent receives at least one item. Note that $|\cA'|=T(m,n)$.
Define $\cA'_k\subseteq \cA'$ to be the set of allocations $\bA\in\cA'$ such that $\min_{i\in N}v_i(A_i)_k=\Emax_k$ for each dimension $k\in L$.

Note that $\cA'_k$ is not empty for each $k\in L$ because $\Emax_k$ is achievable while ensuring that no agent receives zero items.
Thus, $\sum_{k=1}^\ell|\cA'_k|$ is at least $\ell$.
On the other hand, for each allocation $\bA \in \cA'$, let $f'(\bA)=|\{k\in[\ell]\mid \bA\in\cA'_k\}|$ be the number of dimensions in which $\bA$ achieves Emax. 
Then, we have
\begin{align}
\sum_{\bA \in \cA'} f'(\bA) = \sum_{k=1}^\ell |\cA'_k| 
\geq \ell.
\end{align}
Thus, by the pigeonhole principle, we obtain
\begin{align}
\max_{\bA\in\cA'}\left|\left\{k\in L\mid \min_{i\in N}v_i(A_i)_k=\Emax_k\right\}\right|
=\max_{\bA\in\cA'}f'(\bA) 
\geq \left\lceil \ell/T(m,n) \right\rceil.
\end{align}
This proves the lower bound.

Next, we provide the upper bound.
Choose a mapping $\rho'\colon [\ell]\to\cA'$ such that $|\{k\in[\ell]\mid \rho'(k)=\bA\}|\le \lceil\ell/T(m,n)\rceil$ for every $\bA\in\cA'$. 
Such a mapping exists by distributing the $\ell$ dimensions as evenly as possible among the $|\cA'|=T(m,n)$ allocations.
For each agent $i\in[n]$, item index $j\in[m]$, and dimension $k\in[\ell]$,
let $\rho'(k)=\bA$ and define the valuation as
\begin{align}
v_{ijk}=\begin{cases}
1/|A_i| & \text{if }g_j\in A_i,\\
0       & \text{otherwise}.
\end{cases}
\end{align}
This upper-bound construction is not binary in general because the values $1/|A_i|$ may be fractional; this is the point where the construction differs from the Umax counterpart.

For this instance, the Emax value in dimension $k\in[\ell]$ is $1$, and it can only be achieved by using the allocation $\rho'(k)$.
Hence, for each allocation $\bA\in\cA'$, the number of dimensions in which $\bA$ achieves Emax is $|\{k\in[\ell]\mid \rho'(k)=\bA\}|\le\lceil \ell/T(m,n)\rceil$.
Thus, $\max_{\bA}|\{k\in L\mid \min_{i\in N}v_i(A_i)_k=\Emax_k\}|\le\lceil \ell/T(m,n)\rceil$ for this instance. 
\end{proof}

\begin{algorithm}[H]
\caption{Round-Robin $(1/\ell)$-sUmax$\,1$}
\label{alg:roundrobin-sUmax1}
\KwIn{Agents $N$, items $M$, dimensions $L=\{1,\dots,\ell\}$}
\KwOut{Allocation $\bA$}

Initialize $A_i \gets \emptyset$ for all $i \in N$\;
$M' \gets M$\;

\While{$M' \neq \emptyset$}{
    \For{$k \gets 1$ \KwTo $\ell$}{
        \If{$M' = \emptyset$}{break}
        $g^* \in \arg\max_{g \in M'} \max_{i \in N} v_i(g)_k$\;
        $i^* \in \arg\max_{i \in N} v_i(g^*)_k$\;
        $A_{i^*} \gets A_{i^*} \cup \{g^*\}$\;
        $M' \gets M' \setminus \{g^*\}$\;
    }
}
\Return{$\bA$}
\end{algorithm}

\THMSUMAXEXIST*
\begin{proof}
For each dimension $k\in L$, define the value of item $g\in M$ as $u_k(g)=\max_{i\in N}v_i(g)_k$.
We then apply a round-robin procedure as follows:
In the order $1,2,\dots,\ell$, each dimension $k$ sequentially takes turns selecting its favorite available item $g$ (i.e., an item maximizing $u_k(g)$), and $g$ is then allocated to an agent in $\argmax_{i\in N}v_{i}(g)_k$. This process is repeated in multiple rounds until all items have been allocated.
Let $\bA$ be the resulting allocation. The procedure clearly runs in polynomial time.

We now show that $\bA$ is a $(1/\ell)$-sUmax$\,1$ allocation.

Relabel the item, and let $g_j$ denote the item chosen at the $j$th step.
The items assigned in turn $k\in L$ are $g_k,g_{\ell+k},g_{2\ell+k},\dots,g_{\lfloor (m-k)/\ell\rfloor\cdot \ell+k}$. 
It follows that $\sum_{i\in N}v_i(A_i)_k\ge \sum_{p=0}^{\lfloor (m-k)/\ell\rfloor}u_k(g_{p\ell+k})$.
Note that, for each $k\in L$, $p\in\{0,1,\dots,\lfloor (m-k)/\ell\rfloor\}$, and $q\in[m]$ with $q\ge p\ell+k$, we have $u_k(g_{p\ell+k})\ge u_k(g_q)$.
Therefore,
\begin{align}
\Umax_k
&=\sum_{g\in M}\max_{i\in N}v_i(g)_k
=\sum_{g\in M}u_k(g)
=\sum_{q=1}^mu_k(g_q)\\
&\le\sum_{q=1}^{k-1}u_k(g_q) + \ell\cdot\sum_{p=0}^{\lfloor (m-k)/\ell\rfloor}u_k(g_{p\ell+k})\\
&\le \sum_{q=1}^{k-1}u_k(g_q) + \ell\cdot \sum_{i\in N}v_i(A_i)_k.
\end{align}
Let $P_k=\{q\in\{1,\dots,k-1\}\mid g_q\in D_k(\bA)\}$.
Any item among $g_1,\dots,g_{k-1}$ that is not in $D_k(\bA)$ is assigned to a dimension-$k$ maximizing agent, and hence its full $u_k$-value is already counted in $\sum_i v_i(A_i)_k$.
Therefore, the preceding bound can be sharpened to
\[
\Umax_k\le \sum_{q\in P_k}u_k(g_q)+\ell\cdot \sum_{i\in N}v_i(A_i)_k.
\]
Hence, for every $k\in L$,
\begin{align}
\sum\limits_{i\in N}v_i(A_i)_k
&\ge \frac{\Umax_k}{\ell}-\frac{\sum_{q\in P_k}u_k(g_q)}{\ell}\\
&\ge \frac{\Umax_k}{\ell}-\max_{B\subseteq D_k(\bA):\,|B|\le 1}\sum_{g_j\in B}\max_{i\in N}v_{ijk}.
\end{align}
The last inequality holds because $|P_k|\le k-1\le \ell-1$, so the average loss after division by $\ell$ is at most the largest item in $D_k(\bA)$, or zero if $D_k(\bA)=\emptyset$.
This shows that $\bA$ is a $(1/\ell)$-sUmax$\,1$ allocation.
\end{proof}

\THMSUMAXNON*
\begin{proof}
Consider an instance with $n=\ell$ agents, $m> 1/(\alpha-1/\ell)$ items, and $\ell$ dimensions.
For $i\in N$, $g_j\in M$, and $k\in L$, set the valuation $v_{ijk}$ to $1$ if $i=k$ and $0$ otherwise.

For every $k\in L$, we have $\Umax_k=m$.
Fix any allocation $\bA$.
Then, there is an agent $i^*\in N$ such that $|A_{i^*}|\le m/n=m/\ell$.
Thus,
\begin{align}
\MoveEqLeft
\sum_{i\in N}v_{i}(A_i)_{i^*}
=|A_{i^*}|\le \frac{m}{\ell}=\alpha\cdot m-\left(\alpha-\frac{1}{\ell}\right)\cdot m\\
&<\alpha\cdot m-1
\le\alpha\cdot\Umax_{i^*}-\max_{B\subseteq D_{i^*}(\bA):\,|B|\le 1}\sum_{g\in B}\max_{i\in N}v_i(g)_{i^*},
\end{align}
which implies that $\bA$ is not $\alpha$-sUmax$\,1$ for the chosen instance.
\end{proof}

\begin{algorithm}[H]
\caption{Round-Robin Allocation for $\frac{1}{n\ell}$-sEmax$\,(n\ell-1)$}
\label{alg:roundrobin-sEmax}
\KwIn{Agents $N$, items $M$, dimensions $L=\{1,\dots,\ell\}$, valuations $v_i(g)_k$}
\KwOut{Allocation $\mathbf{A}=(A_1,\dots,A_n)$}

Initialize $A_i \gets \emptyset$ for all $i \in N$\;
$M' \gets M$\;

\While{$M' \neq \emptyset$}{
    \For{$k = 1$ \KwTo $\ell$}{
        \For{$i = 1$ \KwTo $n$}{
            \If{$M' = \emptyset$}{
                \textbf{break}\;
            }
            Select $g^* \in \arg\max_{g \in M'} v_i(g)_k$\;
            Assign $g^*$ to agent $i$: $A_i \gets A_i \cup \{g^*\}$\;
            Remove $g^*$ from $M'$\;
        }
    }
}
\Return $\mathbf{A}$\;
\end{algorithm}

\THMASEAMEXIST*
\begin{proof}
We apply a round-robin procedure as follows:
Agents select items by cycling through all agents for the first dimension, then all agents for the second dimension, and so on, until the $\ell$th dimension is reached.

Formally, the order of selection is $(i,k)$ where $k$ is from $1$ to $\ell$, and for each fixed $k$, is from $1$ to $n$, that is, $(i,k)=(1,1),(2,1),\dots,(n,1),(1,2),(2,2),\dots,(n,2),\dots,(1,\ell),(2,\ell),\dots,(n,\ell)$.
In each step, agent $i$ chooses their most preferred available item $g$ with respect to $k$th dimension (i.e., an item maximizing $v_i(g)_k$). The process is repeated in multiple round until all items have been allocated.
Let $\bA$ be the resulting allocation.
The procedure clearly runs in polynomial time.

We now show that $\bA$ is a $\frac{1}{n\ell}$-sEmax$\,(n\ell-1)$ allocation.

Relabel the item, and let $g_j$ denote the item chosen at the $j$th step. The items assigned agent $i\in N$ are $g_i,g_{n+i},g_{2n+i},\dots,g_{\lfloor (m-i)/n\rfloor\cdot n+i}$. 
It follows that, for each $i\in N$ and $k\in L$, we have 
$$v_i(A_i)_k\ge \sum_{p=0}^{\lfloor (m-(k-1)n-i)/(n\ell)\rfloor}v_i(g_{pn\ell+(k-1)n+i})_k.$$
Note that, for each $i\in N$, $k\in L$, $p\in\{0,1,\dots,\lfloor (m-(k-1)n-i)/(n\ell)\rfloor\}$, and $q\in[m]$ with $q\ge pn\ell+(k-1)n+i$, we have $v_i(g_{pn\ell+(k-1)n+i})_k\ge v_i(g_q)_k$.
Therefore, for every $i\in N$ and $k\in L$,
\begin{align}
\Emax_k
&\le \sum_{g\in M}v_i(g)_k
= \sum_{q=1}^m v_i(g_q)_k\\
&=\sum_{q=1}^{(k-1)n+i-1}v_i(g_q)_k + n\ell\cdot\hspace{-4mm}\sum_{p=0}^{\lfloor (m-(k-1)n-i)/(n\ell)\rfloor}\hspace{-4mm}v_i(g_{pn\ell+(k-1)n+i})_k\\
&\le \sum_{q=1}^{(k-1)n+i-1}v_i(g_q)_k + n\ell\cdot v_i(A_i)_k.
\end{align}
Fix $i\in N$ and $k\in L$, and let
$P_{i,k}=\{g_1,\dots,g_{(k-1)n+i-1}\}\setminus A_i$.
Items in the prefix that already belong to $A_i$ are counted in $v_i(A_i)_k$, so the preceding bound can be strengthened to
\begin{align}
\Emax_k\le \sum_{g\in P_{i,k}}v_i(g)_k+n\ell\cdot v_i(A_i)_k.
\end{align}
Here $P_{i,k}\subseteq M\setminus A_i$ and $|P_{i,k}|\le n\ell-1$, and each term $v_i(g)_k$ is at most $\max_{i'\in N}v_{i'}(g)_k$.
Therefore, for every $i\in N$ and $k\in L$,
\begin{align}
v_i(A_i)_k
\ge \frac{\Emax_k}{n\ell}-\max_{B\subseteq M\setminus A_i:\,|B|\le n\ell-1}\sum_{g_j\in B}\max_{i'\in N}v_{i'jk},
\end{align}
which shows that $\bA$ is a $\frac{1}{n\ell}$-sEmax$\,(n\ell-1)$ allocation.
\end{proof}

\THMSEAXEXIST*
\begin{proof}
For each $k \in L$, we first compute the Emax value for the fractional allocation, which can be formulated as the following linear program (LP):
\begin{align}
\begin{array}{rll}
\max        & \gamma &\\
\text{s.t.} & \gamma \le \sum_{g_j\in M}v_{ijk}x_{ij} & (i\in N),\\[5pt]
            &\sum_{i\in N}x_{ij}=1                    & (g_j\in M),\\[5pt]
            &0\le x_{ij}\le 1                         & ((i,j)\in N\times M).
\end{array}
\end{align}
Let $\fEmax_k$ denote the optimal value and $x^{(k)}$ the corresponding optimal solution of the LP.
Every integral allocation is feasible for this LP, and hence $\fEmax_k\ge\Emax_k$.
Here $\fEmax_k$ is the fractional optimum, whereas $\Emax_k$ denotes the integral optimum used in the definition of sEmax.
Furthermore, $x^*=\sum_{k'\in L}x^{(k')}/\ell$ is a fractional allocation that guarantees each agent at least a $1/\ell$ fraction of $\fEmax_k$ for every $k\in L$. We now construct an integral allocation with a matching additive-loss guarantee.

We now use the iterative rounding framework of Gölz and Yaghoubizade~\cite[Theorem 4.2]{golz2025fair}, and briefly recall its idea. Starting from a basic feasible solution of a polytope, it repeatedly fixes decisions forced by the remaining fractional structure: either an item is almost fully assigned to one agent and is fixed, or some agent constraint has support size at most $\ell$ and those items are fixed together. The polytope is updated after each step; feasibility is preserved and the loss is bounded by the sum of the largest values of at most $\ell$ items, yielding an integral allocation. We apply the framework with our target guarantee by replacing the proportionality condition $v_i(M)_k/n$ (written as $u_{gi}(M)/n$ in their notation) with $\fEmax_k/\ell$, which is feasible for $x^*$.
Then the algorithm outputs an allocation $\bA$ such that
\begin{align}
v_i(A_i)_k\ge \frac{\fEmax_k}{\ell}-\max_{B\subseteq M\setminus A_i:\,|B|\le \ell}\sum_{g_j\in B}\max_{i'\in N}v_{i'jk}
\end{align}
holds for each $i\in N$ and $k\in L$.
Since $\fEmax_k\ge\Emax_k$, for every $i\in N$ and $k\in L$ we have
\begin{align}
v_i(A_i)_k
&\ge \frac{\Emax_k}{\ell}-\max\limits_{B\subseteq M\setminus A_i:\,|B|\le \ell}\sum\limits_{g_j\in B}\max\limits_{i'\in N}v_{i'jk}.
\end{align}
Thus, $\bA$ is a $\frac{1}{\ell}$-sEmax$\,\ell$ allocation.
\end{proof}

\begin{algorithm}[H]
\caption{Iterative Rounding for $(1/\ell)$-sEmax$\,\ell$}
\label{alg:sEmax-iterative}
\KwIn{Agents $N$, items $M$, dimensions $L=\{1,\dots,\ell\}$, valuations $v_i(g)_k$}
\KwOut{Allocation $\bA=(A_1,\dots,A_n)$}

Initialize $A_i \gets \emptyset$ for all $i \in N$\;

\For{each dimension $k \in L$}{
    Solve LP:
    \[
    \max \gamma \quad
    \text{s.t. } \gamma \le \sum_{g \in M} v_i(g)_k x_{ig} \ \forall i,\;
    \sum_{i} x_{ig} = 1,\;
    0 \le x_{ig} \le 1
    \]
    Let $x^{(k)}$ be the optimal solution\;
}

Compute averaged fractional allocation:
\[
x^*_{ig} \gets \frac{1}{\ell}\sum_{k \in L} x^{(k)}_{ig}
\]

\While{there exists fractional $x^*$}{
    \If{there exists item $g$ with $x^*_{ig}=1$ for some $i$}{
        Assign $g$ to $i$: $A_i \gets A_i \cup \{g\}$\;
        Remove $g$ from the instance and update $x^*$\;
    }
    \Else{
        Find agent $i$ with at most $\ell$ fractional items, whose existence follows from the iterative-rounding lemma~\cite[Theorem~4.2]{golz2025fair}\;
        Assign all such items to $i$\;
        Update instance and $x^*$\;
    }
}

\Return $\bA$\;
\end{algorithm}

\THMSEMAXNONEXIST*
\begin{proof}
Consider an instance with $n=\ell$ agents, $m> cn/(\alpha-1/\ell)$ items, and $\ell$ dimensions.
Suppose that $m$ is a multiple of $n$.
For $i\in N$, $g_j\in M$, and $k\in L$, set the valuation $v_{ijk}$ to $1$ if $j\equiv i+k-1\pmod{n}$ and $0$ otherwise.

For every $k\in L$, the Emax value is $\Emax_k=m/n$ by allocating each item $g_j$ to the agent $i\equiv j-k+1\pmod{n}$.
Fix any allocation $\bA$.
Then, there is an agent $i^*\in N$ such that $|A_{i^*}|\le m/n$.
Moreover, there is a dimension $k^*\in L$ such that $v_{i^*}(A_{i^*})_{k^*}\le |A_{i^*}|/\ell\le m/(n\ell)$.
Indeed, for fixed $i^*$, each item contributes value $1$ to agent $i^*$ in exactly one dimension, so $\sum_{k\in L}v_{i^*}(A_{i^*})_k=|A_{i^*}|$ and the claim follows by averaging.
Thus,
\begin{align}
v_{i^*}(A_{i^*})_{k^*}
&\le \frac{m}{n\ell}=\alpha \frac{m}{n}-\Big(\alpha-\frac{1}{\ell}\Big)\cdot \frac{m}{n}
<\alpha\cdot \frac{m}{n}-c\\
&\le\alpha\Emax_{k^*}-\max_{B\subseteq M\setminus A_{i^*}:\,|B|\le c}\sum_{g\in B}\max_{i\in N}v_i(g)_{k^*},
\end{align}
which implies that $\bA$ is not $\alpha$-sEmax$\,c$ for the chosen instance.
\end{proof}

\THMSUMAXGE*
\begin{proof}
The decision problem is in NP since a proposed allocation can be checked directly.

We present a polynomial-time reduction from the \emph{Hitting set problem}. 
In the Hitting set problem, we are given a universe $U=\{a_1,a_2,\dots,a_n\}$, subsets $S_1,S_2,\dots,S_m\subseteq U$, and an integer $h$. The goal is to check existence of a subset $U'\subseteq U$ of size $h$ such that $U'\cap S_p\ne \emptyset$ for all $p\in [m]$. It is known that this problem is NP-complete~\cite{garey1979computers}.

From a given instance of the Hitting set problem,
we construct an MDEA instance with agents $N=[n]$, items $M=\{g_1,g_2,\dots,g_h\}$, and dimensions $L=[m]$.
For $i\in N$, $g_j\in M$, and $p\in L$, the valuation $v_{ijp}$ is $1$ if $a_i\in S_p$ and $0$ otherwise.
We remark that the valuation $v_{ijp}$ does not depend on the item $g_j$. 

Intuitively, any agent who receives at least one item in the MDEA instance corresponds to an element selected as $U'$ in the Hitting set instance.
We now show that the USW $\sum_{i\in N}v_i(A_i)_p$ is at least $1$ for every $p\in L$ if and only if the Hitting set instance is a yes-instance.

Suppose that the Hitting set instance is a yes-instance, i.e., there exists a subset $U'\subseteq U$ with $|U'|=h$ such that $U'\cap S_p\ne\emptyset$ for all $p\in [m]$. 
Let $U'=\{a_{q_1},a_{q_2},\dots,a_{q_h}\}$, and consider an allocation $\bA$ such that $A_{q_t}=\{g_t\}$ for each $t\in[h]$ and $A_i=\emptyset$ for each $a_i\in U\setminus U'$.
Then, for each dimension $p\in L$, the USW is $\sum_{i\in N}v_i(A_i)_p=|\{a_i\in U'\mid a_i\in S_p\}|\ge 1$.

Conversely, suppose that there exists an allocation $\bA$ such that $\sum_{i\in N} v_i(A_i)_p\ge 1$ for all $p\in L$. 
Let $U'=\{a_i\in U\mid A_i\ne\emptyset\}$ (if the cardinality of $U'$ is smaller than $h$, we arbitrarily add elements to $U'$ so that its size becomes $h$).
Then, $U'\cap S_p\ne\emptyset$ by $\sum_{i\in N}v_i(A_i)_p\ge 1$. 

Therefore, the problem of deciding whether the maximum value of $\min_{p\in L}\sum_{i\in N} v_i(A_i)_p$ is at least one or zero is NP-complete. This establishes the claimed hardness results.
   
\end{proof}

\THMSUMAXLL*
\begin{proof}
We present a polynomial-time reduction from the PARTITION problem, which is known to be NP-complete~\cite{garey1979computers}.
In the PARTITION problem, we are given $m$ positive integers $a_1,a_2,\cdots,a_m$ such that $\sum_{j=1}^{m}a_j=2T$. The goal is to determine whether there exists a partition $(S_1,S_2)$ of the set $[m]$ such that $\sum_{j\in S_1}a_j=\sum_{j\in S_2}a_j=T$. 

From a given instance of the PARTITION problem, we construct an MDEA instance with agents $N=\{1,2\}$, items $M=\{g_1,g_2,\dots,g_m\}$, and dimensions $L=\{1,2\}$.
For each agent $i\in N$, the valuation for item $g_j\in M$ and dimension $k\in L$ is defined as
\begin{align}
v_{ijk}=\begin{cases}
a_j & \text{if }i=k,\\
0   & \text{if }i\ne k.
\end{cases}
\end{align}
For an allocation $\bA$, we have 
$v_1(A_1)_1\ge T$ if and only if $\sum_{g_j\in A_1}a_j\ge T$ and 
$v_2(A_2)_2\ge T$ if and only if $\sum_{g_j\in A_2}a_j\ge T$.
Thus, the MDEA instance admits an allocation $\bA$ with $\min_{k\in L}\sum_{i\in N}v_i(A_i)_k\ge T$ if and only if the PARTITION instance is a yes-instance.
Therefore, finding an allocation $\bA$ that maximizes $\min_{k\in L}\sum_{i\in N}v_i(A_i)_k$ is NP-hard.

\end{proof}

\PROPOUESW*
\begin{proof}
    Consider an MDEA instance with agents $N=\{1,2\}$, items $M=\{g_1, g_2\}$, and dimensions $L=\{1\}$. 
    Suppose that the valuations are given by $v_1(g_1)_1=v_1(g_2)_1=2$, $v_2(g_1)_1=0$, and $v_2(g_2)_1=1$. 
    Then, allocation $(\{g_1,g_2\},\emptyset)$ is the unique PO-USW allocation, with USW value $4$.
    In contrast, $(\{g_1\},\{g_2\})$ is the unique PO-ESW allocation, with ESW value $1$.
    Therefore, PO-USW and PO-ESW are not compatible in this instance.   
\end{proof}

\THMUSWIM*
\begin{proof}
We begin by proving the contrapositive of the first statement.
Let $\bA$ be an allocation that is not PO-agent in an MDEA instance $(N,M,L,(v_i)_{i\in N})$.
Then, there exists an allocation $\bA'$ that Pareto dominates $\bA$ with respect to agents, i.e., 
$v_i(A'_i)_k\ge v_i(A_i)_k$ for every $i\in N$, $k\in L$, and
$v_{i^*}(A'_{i^*})_{k^*}> v_{i^*}(A_{i^*})_{k^*}$ for some $i^*\in N$, $k^*\in L$.
This implies that $\bA'$ also Pareto dominates $\bA$ with respect to USW because 
$\sum_{i\in N}v_i(A'_i)_k\ge \sum_{i\in N}v_i(A_i)_k$ for every $k\in L$, and
$\sum_{i\in N}v_{i}(A'_{i})_{k^*}> \sum_{i\in N}v_{i}(A_{i})_{k^*}$ for $k^*\in L$.

Next, we prove the second statement.
Consider an MDEA instance with agents $N=\{1,2\}$, items $M=\{g\}$, and dimensions $L=\{1\}$. 
Suppose that the valuations are given by $v_1(g)_1=2$ and $v_2(g)_1=1$. 
Then, allocation $(\{g\},\emptyset)$ is the unique PO-USW allocation, with USW value $2$.
However, $(\emptyset,\{g\})$ is also a PO-agent allocation.
Therefore, PO-agent does not necessarily imply PO-USW.
\end{proof}
\THMESWIM*
\begin{proof}
For the first statement, consider an MDEA instance with agents $N=\{1,2\}$, items $M=\{g_1,g_2,g_3\}$, and dimensions $L=\{1\}$. 
Suppose that the valuations are given by 
$(v_1(g_1)_1,v_1(g_2)_1,v_1(g_3)_1)=(3,2,1)$ and 
$(v_2(g_1)_1,v_2(g_2)_1,v_2(g_3)_1)=(1,3,0)$.
Then, allocation $\bA^{(1)}=(\{g_1,g_2,g_3\},\emptyset)$ is PO-agent, but not PO-ESW.
In contrast, allocation $\bA^{(2)}=(\{g_1\},\{g_2,g_3\})$ is PO-ESW, but not PO-agent.
The allocation $\bA^{(3)}=(\{g_1,g_3\},\{g_2\})$ satisfies both PO-agent and PO-ESW.
Note that $\bA^{(3)}$ Pareto dominates $\bA^{(1)}$ with respect to ESW and $\bA^{(2)}$ with respect to agents.

For the second statement, consider the finite set of allocations that satisfy PO-agent, and choose one, say $\bA$, that is maximal with respect to PO-ESW among this set.
Such an allocation exists because every finite partially ordered set has a maximal element.
We claim that $\bA$ is also PO-ESW among all allocations.
Suppose otherwise that some allocation $\bB$ Pareto dominates $\bA$ with respect to ESW.
Starting from $\bB$, repeatedly apply Pareto improvements with respect to agents until reaching an allocation $\bA^\star$ that is PO-agent.
Then, for every $k\in L$, each agent-Pareto improvement weakly increases every $v_i(A_i)_k$, and hence weakly increases $\min_{i\in N}v_i(A_i)_k$.
Therefore, $\bA^\star$ Pareto dominates $\bB$ with respect to ESW, and thus also Pareto dominates $\bA$ with respect to ESW.
This contradicts the choice of $\bA$ as a PO-ESW-maximal allocation among all PO-agent allocations.
Hence, $\bA$ satisfies both PO-agent and PO-ESW.
\end{proof}

\THMPOUSW*
\begin{proof}
The problem is in coNP since a Pareto improvement is a polynomial-size certificate that a given allocation is not PO-USW.

We provide a reduction from the PARTITION problem.
Suppose that we are given $m$ positive integers $a_1,a_2,\cdots,a_m$ such that $\sum_{j=1}^{m}a_j=2T$. 

From a given instance of the PARTITION problem, we construct an MDEA instance with agents $N=\{1,2\}$, items $M=\{g_1,g_2,\dots,g_m,g_{m+1}\}$, and dimensions $L=\{1,2\}$.
For $i\in N$, $g_j\in M$, $k\in L$, the valuation is defined as
\begin{align}
v_{ijk}=\begin{cases}
a_j   & \text{if $i=k$ and $j\le m$},\\
T     & \text{if $(i,j,k)=(1,m+1,1)$},\\
T-1/2 & \text{if $(i,j,k)=(2,m+1,2)$},\\
0     & \text{otherwise}.
\end{cases}
\end{align}
The valuations are illustrated in \Cref{tab:po-usw-coNPhard}.
Note that the USWs for an allocation $\bA$ are $v_1(A_1)_1$ and $v_2(A_2)_2$ for the first and the second dimension, respectively.
Consider the problem of checking whether $\bA^*\coloneqq(\{g_1,\dots,g_m\},\{g_{m+1}\})$ is a PO-USW allocation.

\begin{table}[htbp]
\centering
\caption{The valuations $v_{ijk}$ for the reduced instance in \Cref{thm:po-usw-coNPhard}.}\label{tab:po-usw-coNPhard}
\tabcolsep = 1.2mm
\begin{tabular}{c||c|c|c|c|c|c|c|c|c|c}
\toprule
& \multicolumn{2}{c|}{$g_1$} 
& $\cdots$
& \multicolumn{2}{c|}{$g_j$} 
& $\cdots$
& \multicolumn{2}{c|}{$g_m$} 
& \multicolumn{2}{c|}{$g_{m+1}$}\\\hline
& $\textcolor{blue}{1}$ & $\textcolor{blue}{2}$ & $\cdots$ & $\textcolor{blue}{1}$ & $\textcolor{blue}{2}$ & $\cdots$ &  $\textcolor{blue}{1}$ & $\textcolor{blue}{2}$ & $\textcolor{blue}{1}$ & $\textcolor{blue}{2}$ \\\hline
$1$ & $a_1$ & $0$   & $\cdots$ & $a_j$ & $0$   & $\cdots$ & $a_m$ & $0$   & $T$ & $0$    \\
$2$ & $0$   & $a_1$ & $\cdots$ & $0$   & $a_j$ & $\cdots$ & $0$   & $a_m$ & $0$  & $T-1/2$ \\
\bottomrule
\end{tabular}
\end{table}

Suppose that $\bA'$ Pareto improves $\bA^*$ with respect to USW.
Then, we have $v_1(A'_1)_1\ge v_1(A^*_1)_1=2T$, $v_2(A'_2)_2\ge v_2(A^*_2)_2=T-1/2$, and $v_1(A'_1)_1+v_2(A'_2)_2> v_1(A^*_1)_1+v_2(A^*_2)_2=3T-1/2$.
This is possible only when $v_1(A'_1)_1=2T$ and $v_2(A'_2)_2=T$.
Consequently, for $I=\{j\in[m]\mid g_j\in A'_2\}$, we have $\sum_{j\in I}a_j=T$.
Therefore, such an allocation $\bA'$ exists if and only if the PARTITION instance is a YES-instance.
Hence, checking whether a given allocation is PO-USW is coNP-complete, even in the case of two agents and two dimensions.
\end{proof}

\THMPOESW*
\begin{proof}
The first statement follows from the NP-hardness of computing an Emax allocation in the single-dimensional two-agent case. We prove only the second statement.

The problem is in coNP since a Pareto improvement is a polynomial-size certificate that a given allocation is not PO-ESW.

We provide a reduction from the PARTITION problem.
Suppose that we are given $m$ positive integers $a_1,a_2,\cdots,a_m$ such that $\sum_{j=1}^{m}a_j=2T$. 

From a given instance of the PARTITION problem, we construct an MDEA instance with agents $N=\{1,2\}$, items $M=\{g_1,g_2,\dots,g_m,g_{m+1}\}$, and dimensions $L=\{1,2\}$.
For $i\in N$, $g_j\in M$, $k\in L$, the valuation is defined as
\begin{align}
v_{ijk}=\begin{cases}
a_j    & \text{if $i=1$ and $j\le m$},\\
T      & \text{if $i=1$ and $j=m+1$},\\
2a_j   & \text{if $i=2$ and $j\le m$},\\
2T-1/2 & \text{if $i=2$ and $j=m+1$},\\
0      & \text{otherwise}.
\end{cases}
\end{align}
The valuations are illustrated in \Cref{tab:po-esw-coNPhard}.
Consider the problem of checking whether $\bA^*\coloneqq(\{g_1,\dots,g_m\},\{g_{m+1}\})$ is a PO-ESW allocation.
Note that the ESW value for $\bA^*$ is $\min\{v_1(A^*_1)_1,v_2(A^*_2)_1\}=\min\{2T,2T-1/2\}=2T-1/2$.

\begin{table}[htbp]
\centering
\caption{The valuations $v_{ijk}$ for the reduced instance in \Cref{thm:po-esw-coNPhard}.}\label{tab:po-esw-coNPhard}
\tabcolsep = 1.2mm
\begin{tabular}{c||c|c|c|c|c|c|c|c|c|c}
\toprule
& \multicolumn{2}{c|}{$g_1$} 
& $\cdots$
& \multicolumn{2}{c|}{$g_j$} 
& $\cdots$
& \multicolumn{2}{c|}{$g_m$} 
& \multicolumn{2}{c|}{$g_{m+1}$}\\\hline
& $\textcolor{blue}{1}$ & $\textcolor{blue}{2}$ & $\cdots$ & $\textcolor{blue}{1}$ & $\textcolor{blue}{2}$ & $\cdots$ &  $\textcolor{blue}{1}$ & $\textcolor{blue}{2}$ & $\textcolor{blue}{1}$ & $\textcolor{blue}{2}$ \\\hline
$1$ & $a_1$ & $a_1$   & $\cdots$ & $a_j$ & $a_j$   & $\cdots$ & $a_m$ & $a_m$   & $T$ & $T$    \\\hline
$2$ & $2a_1$ & $2a_1$   & $\cdots$ & $2a_j$ & $2a_j$   & $\cdots$ & $2a_m$ & $2a_m$   & $2T-1/2$ & $2T-1/2$    \\
\bottomrule
\end{tabular}
\end{table}

Suppose that $\bA'$ has a strict better ESW value than $\bA^*$.
Then, we have $v_1(A'_1)_1\ge 2T$ and $v_2(A'_2)_1\ge 2T$.
This is possible only when $v_1(A'_1)_1=2T$ and $v_2(A'_2)_1=2T$.
Consequently, for $I=\{j\in[m]\mid g_j\in A'_2\}$, we have $\sum_{j\in I}a_j=T$.
Therefore, such an allocation $\bA'$ exists if and only if the PARTITION instance is a YES-instance.
Hence, checking whether a given allocation is PO-ESW is coNP-complete, even in the case of two agents and two dimensions.
\end{proof}

\end{document}